\pgfplotsset{compat=newest} 
\pgfplotsset{plot coordinates/math parser=false}
\pgfplotsset{every  tick/.style={black,},ylabel style={font=\tiny},xlabel style={font=\tiny},tick label style={font=\tiny},legend style= {font=\scriptsize},
minor x tick num=1,minor y tick num=1,xminorticks=true,yminorticks=true,}
\newlength\fheight
\newlength\fwidth
\renewcommand{\Function}[2]{%
  \csname ALG@cmd@\ALG@L @Function\endcsname{#1}{#2}%
  \def\jayden@currentfunction{#1}%
}
\newcommand{\funclabel}[1]{%
  \@bsphack
  \protected@write\@auxout{}{%
    \string\newlabel{#1}{{\jayden@currentfunction}{\thepage}}%
  }%
  \@esphack
}
\newtheorem{theorem}{Theorem}
\newtheorem{proposition}{Proposition}
\newtheorem{lemma}{Lemma}
\newtheorem{remark}{Remark}
\definecolor{cornellred}{rgb}{0.7, 0.11, 0.11}
\def\biglen{20cm} 
\tikzset{
  half plane/.style={ to path={
       ($(\tikztostart)!.5!(\tikztotarget)!#1!(\tikztotarget)!\biglen!90:(\tikztotarget)$)
    -- ($(\tikztostart)!.5!(\tikztotarget)!#1!(\tikztotarget)!\biglen!-90:(\tikztotarget)$)
    -- ([turn]0,2*\biglen) -- ([turn]0,2*\biglen) -- cycle}},
  half plane/.default={1pt}
}
\DeclareMathAlphabet{\pazocal}{OMS}{zplm}{m}{n}
\DeclareMathOperator*{\argmin}{arg\,min}
\begin{document}

\title{Low Complexity Hybrid Beamforming for mmWave Full-Duplex Integrated Access and Backhaul}



\author{Elyes Balti$^1$, Chris Dick$^2$ and Brian L. Evans$^1$ \\
$^1$6G@UT Research Center, Wireless Networking and Communications Group (WNCG) \\ The University of Texas at Austin, Austin, TX,
ebalti@utexas.edu, bevans@ece.utexas.edu \\
$^2$NVIDIA, Santa Clara, CA, 
cdick@nvidia.com \\
\thanks{E. Balti and B. L. Evans were supported by NVIDIA, an affiliate of the WNCG 6G@UT Research Center at UT Austin.}}

\maketitle

\begin{abstract}
We consider an integrated access and backhaul (IAB) node operating in full-duplex (FD) mode. We analyze simultaneous transmission from the New Radio gNB to the IAB node on the backhaul uplink, IAB node to a user equipment (UE) on the access downlink, and IAB transmitter to the IAB receiver on the self-interference (SI) channel. Our contributions include (1) a low complexity algorithm to jointly design the hybrid analog/digital beamformers for all three nodes to maximize the sum spectral efficiency of the access and backhaul links by canceling SI and maximizing received power; (2) derivation of all-digital beamforming and spectral efficiency upper bound for use in benchmarking; and (3) simulations to compare full vs. half duplex modes, hybrid vs. all-digital beamforming algorithms, proposed hybrid vs. conventional beamforming algorithms, and spectral efficiency upper bound. In simulations, the proposed algorithm shows significant reduction in SI power and increase in sum spectral efficiency.
\end{abstract}

\begin{IEEEkeywords}
Integrated Access and Backhaul, Beamforming, Full-Duplex, mmWave, Self-Interference.
\end{IEEEkeywords}

\section{Introduction}
\IEEEPARstart{F}{uture} wireless networks are expected to have densely deployed basesetations (BSs) to support future applications, such as the Internet of Things, virtual/augmented reality, and vehicle-to-everything. However, traditional fiber backhauling is often unavailable or prohibitively expensive for carrier operators. Integrated access and backhaul (IAB) technology has emerged as a cost-effective alternative. In the case of IAB, only a few BSs are connected to the traditional wired infrastructures while the others relay the backhaul traffic wirelessly \cite{access,access1}. In a typical IAB framework, the access and backhaul links share the same frequency spectrum, which results in resource collision; thus, resource management is required to resolve this issue. Owing to the simplicity of implementation, many previous studies have incorporated half duplex (HD) constraints in their frameworks \cite{4}. In the HD IAB approach, the access and backhaul links must use the given radio resources orthogonally, be it in time or frequency. While this helps prevent collisions in the two links, it fails to exploit the full potential of the given radio resources. 

In contrast, a smarter IAB framework with full duplex
(FD) techniques may simply rule out the HD constraint. FD  systems have recently gained enormous attention in academia and industry due to its potential to reduce latency and double spectral efficiency in the link budget compared to the HD relays that transmit and receive in different time slots. These benefits make FD applicable in practice such as machine-to-machine and integrated access and backhaul which is currently proposed in 3GPP Release 17 \cite{backhaul1,release17,fdbackhaul}. 

Although FD brings many advantages, it suffers from loopback self-interference (SI), which is caused by the simultaneous transmission and reception over the same resource blocks. This loopback signal cannot be neglected as the  SI power can be several orders of magnitude stronger than the signal power received from the user equipment (UE), which can render FD systems dysfunctional \cite{zf}. To address this limitation, related work proposed robust beamforming design to suppress the SI signal and achieve acceptable spectral efficiency \cite{ian,adaptivebeamforming,backhaulenergy,fd1}. Authors in \cite{ian1} proposed a hybrid analog/digital beamforming for FD systems with limited dynamic range. In addition, authors in \cite{frequency} proposed a low complexity frequency-domain successive SI cancellation for FD radios. Authors in \cite{irs,irs1} proposed a robust beamforming design for an intelligent reflecting surfaces assisted FD multiuser systems to wipe out SI and improve sum spectral efficiency.

In this paper, we consider an FD IAB system. To address SI, we propose low complexity hybrid analog/digital beamforming to cancel SI, avoid analog-to-digital converter (ADC) saturation and maximize sum spectral efficiency of the access and backhaul links. We derive an all-digital solution and upper bound, and compare full vs. half duplex, hybrid vs. all-digital beamforming, conventional SI cancellation, and  upper bound.

Below, Section II describes the system model. Section III presents the optimization problem and beamforming design. Section IV gives numerical results. Section V concludes.

\textbf{Notation}: Bold lowercase $\mathbf{x}$ denotes column vectors, bold uppercase $\mathbf{X}$ denotes matrices, non-bold letters $x, X$ denote scalar values. Using this notation, $\| \mathbf{X}\|_F$ is the Frobenius norm, $\sigma_\ell(\mathbf{X})$ is the $\ell$-th singular value of $\mathbf{X}$ in decreasing order, $\mathrm{det}(\mathbf{X})$ denotes the determinant, $\mathrm{Tr}(\mathbf{X})$ denotes the trace, $\mathbf{X}^*$ is the Hermitian or conjugate transpose, $\mathbf{X}^{-1}$ denotes the inverse of a square non-singular matrix.
\section{System Model}
\begin{figure}[t]
    \centering
    \includegraphics[width=\linewidth]{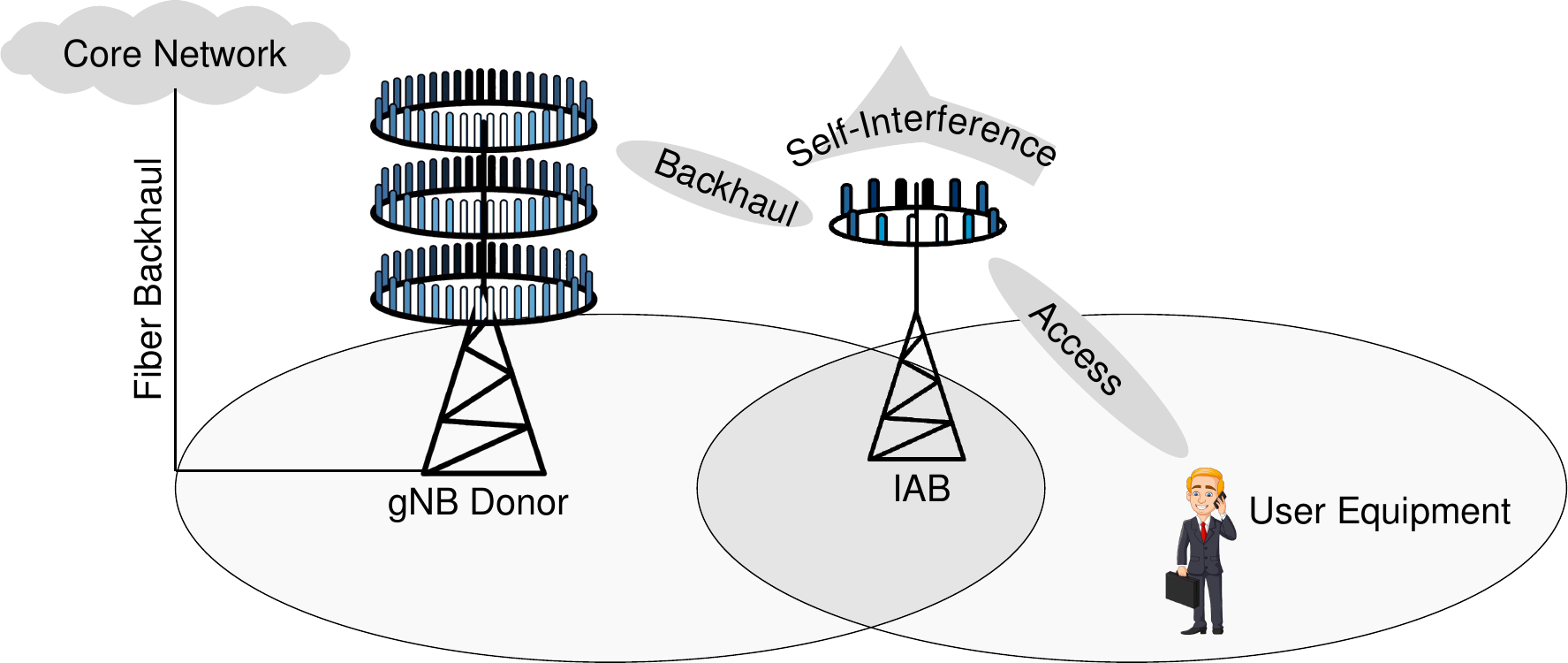}
    \caption{Full-duplex integrated access and backhaul (IAB) for a single-user case. The gNB donor, linked to the core network by fiber backhaul, communicates with the IAB node through wireless backhaul. The user equipment is served by the IAB node through the wireless access link. Simultaneous transmission and reception of the IAB node over the same time/frequency resources blocks incurs loopback self-interference. }
    \label{iab}
\end{figure}

Per Fig. \ref{iab}, the system transmits from gNB to IAB nodes on backhaul uplink, IAB node to user equipment on access downlink, and IAB transmitter to receiver on SI channel.

\subsection{Access and Backhaul Channel Models}

Per Fig. \ref{architecture}, the backhaul uplink channel, $\mathbf{H_b}$, and the downlink access channel, $\mathbf{H_a}$, each have the form
\begin{equation}\label{channel}
\begin{split}
\mathbf{H} = \sqrt{\frac{N_{\mathsf{RX}}N_{\mathsf{TX}}}{CR_c}} \sum_{c=0}^{C-1}\sum_{r_c=0}^{R_c-1} \alpha_{r_c} \mathbf{a}_{\mathsf{RX}}(\theta_{r_c})   \mathbf{a}_{\mathsf{TX}}^*(\phi_{r_c})    
\end{split}
\end{equation}
Where $C$ is number of clusters, $R_c$ is number of rays per cluster, and $\theta_{r_c}$ and $\phi_{r_c}$ are the angles of arrival (AoA) and departure (AoD) of the $r_c$-th ray, respectively. Each ray has a relative time delay $\tau_{r_c}$ and complex path gain $\alpha_{r_c}$. Also, $\mathbf{a}_{\mathsf{RX}}(\theta)$ and $\mathbf{a}_{\mathsf{TX}}(\phi)$ are the RX and TX antenna array response vectors, respectively. The array response vector is given by
\begin{equation}
\mathbf{a}_{\mathsf{X}}(\theta) = \frac{1}{\sqrt{N_{\mathsf{X}}}}\left[1,\mathsf{e}^{j\frac{2\pi d}{\lambda} \sin(\theta)},\ldots,\mathsf{e}^{j\frac{2\pi d}{\lambda}\left(N_{\mathsf{X}} -1\right)\sin(\theta)}  \right]^{T}.    
\end{equation}
Where $\mathsf{X}$ is the TX or RX and $N_{\mathsf{X}}$ is the number antennas.
\begin{figure*}[t]
    \centering
    \includegraphics[width=\linewidth]{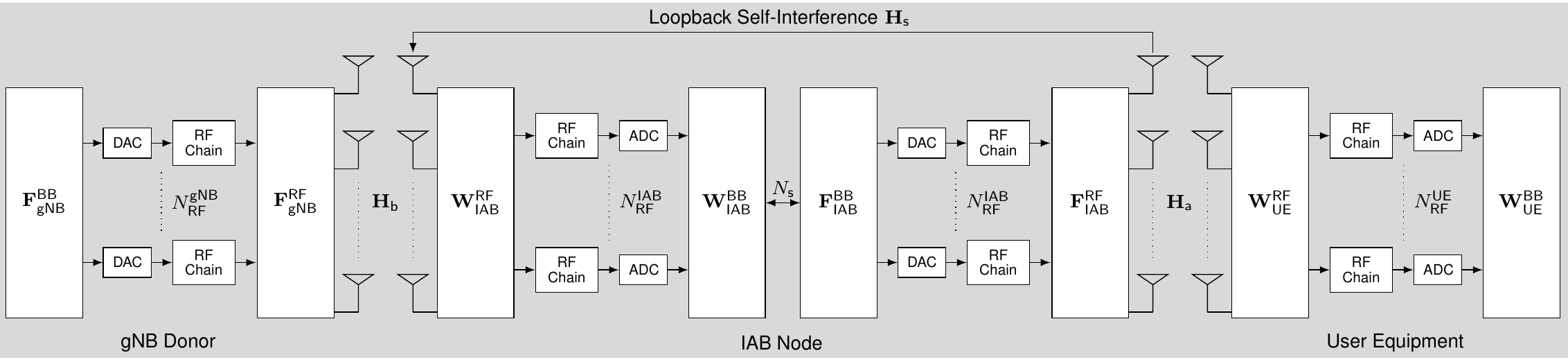}
    \caption{Basic abstraction of the hybrid analog/digital architecture of the full-duplex integrated access and backhaul system.
    The backhaul channel is between the gNB donor and IAB node, and
    the access channel is between the IAB node and the user equipment.}
    \label{architecture}
\end{figure*}

\subsection{Self-Interference Channel Model}
\begin{figure}[t]
\centering
\setlength\fheight{7.5cm}
\setlength\fwidth{7.5cm}

\usetikzlibrary{shapes.misc,shapes.geometric,shapes.symbols,positioning,shadings,automata}
\tikzstyle{XORgate} = [draw,circle]
\newcommand{\antena}{--++(3mm,0)--++(30:5mm)--++(-90:5mm)--++(150:5mm);}
\newcommand{\suma}{\Large$+$}

\definecolor{mycolor1}{rgb}{1.00000,0.00000,1.00000}%
\definecolor{orange}{rgb}{0.0,0.0,0.0}
\pgfdeclarelayer{background}
\pgfdeclarelayer{foreground}
\pgfsetlayers{background,main,foreground}
\usetikzlibrary{shapes,arrows}
\newcommand{\mx}[1]{\mathbf{\bm{#1}}} 
\newcommand{\vc}[1]{\mathbf{\bm{#1}}} 

\tikzstyle{sensor}=[draw, fill=yellow!30, text width=2em, 
    text centered, minimum height=2em, rounded corners]

\tikzstyle{chain}=[draw, fill=red!30, text width=2em, 
    text centered, minimum height=2em, rounded corners]
    
\tikzstyle{center1}=[draw=white,fill=white!20]   
    
\tikzstyle{ann} = [above, text width=5em]
\tikzstyle{node} = [sensor, text width=4em, fill=orange!50, 
    minimum height=19em, rounded corners]
   
\tikzstyle{circ} = [draw,circle,radius=0.5cm,fill=orange!50]
\def\blockdist{2.3}
\def\edgedist{2.5}
\def\antenna{%
    -- +(0mm,4.0mm) -- +(2.2mm,5.5mm) -- +(-2.2mm,5.5mm) -- +(0mm,4.0mm)
}

\usetikzlibrary{positioning}
\begin{tikzpicture}[thin,scale=0.675]


 \tkzDefPoint(0,0){A} \tkzDefPoint(1,0){B} \tkzDefPoint(1,.5){C} 
\tkzMarkAngle[fill= orange!40,size=1.4cm,opacity=.5](B,A,C)
\tkzLabelAngle[pos=0.9](B,A,C){$\omega$}  
\draw [-,dotted,line width=1pt] (0,0) to (2,0);                           
\draw [-,dotted,line width=1pt] (0,0) to (2,1);                           
\draw [-,orange,line width=2pt] (2,0) to (10,0); 
\draw [-,orange,line width=2pt] (2,0) to (2,0.5);
\draw [-,orange,line width=2pt] (3,0) to (3,0.5);
\draw [-,orange,line width=2pt] (4,0) to (4,0.5);
\draw [-,orange,line width=2pt] (5,0) to (5,0.5);
\draw [-,orange,line width=2pt] (6,0) to (6,0.5);
\draw [-,orange,line width=2pt] (7,0) to (7,0.5);
\draw [-,orange,line width=2pt] (8,0) to (8,0.5);
\draw [-,orange,line width=2pt] (9,0) to (9,0.5);
\draw [-,orange,line width=2pt] (10,0) to (10,0.5);
\node[align=right] at (10,-0.7){\scriptsize{\sffamily{RX Array}}};
\node[] at (2,-0.3){1};
\node[] at (3,-0.3){2};
\node[] at (5,-0.3){$q$};

\draw [-,orange,line width=2pt] (2,1) to (8.88,4.4); 
\draw [-,orange,line width=2pt] (2,1) to (1.75,1.5);
\draw [-,orange,line width=2pt] (2.86,1.4) to (2.61,1.9);
\draw [-,orange,line width=2pt] (3.72,1.9) to (3.47,2.4);
\draw [-,orange,line width=2pt] (4.58,2.3) to (4.33,2.8);
\draw [-,orange,line width=2pt] (5.44,2.7) to (5.19,3.2);
\draw [-,orange,line width=2pt] (6.3,3.1) to (6.05,3.6);
\draw [-,orange,line width=2pt] (7.16,3.6) to (6.91,4.1);
\node[] at (6.91,4.3){$p$};
\draw [-,orange,line width=2pt] (8.02,4) to (7.77,4.5);
\draw [-,orange,line width=2pt] (8.88,4.4) to (8.63,4.9);

\node[align=right] at (10,4.8){\scriptsize{\sffamily{TX Array}}};

\draw [-,line width=1.5pt] (5,0) to (7.16,3.6);
\node[] at (5.5,1.7){$d_{pq}~$~};
\node[] at (7.1,2.7){~\scriptsize{\sffamily{LOS}}};

\draw [-,black,line width=2.5pt] (11.2,2) to (11,3);
\node[] at (10.8,1.7){~\scriptsize{\sffamily{NLOS}}};

\draw [-,dash pattern={on 5pt off 3pt on 0pt off 0pt} ,line width=1pt] (7.16,3.6) to (11.1,2.5);
\draw [-,dash pattern={on 5pt off 3pt on 0pt off 0pt} ,line width=1pt] (11.1,2.5) to (5,0);


\draw [-,dotted,line width=1pt] (-0.5,0) to (0,0);                                                      
\draw [-,dotted,line width=1pt] (-0.5,1) to (2,1);                                                      
\draw [latex-latex,line width=1pt] (-0.1,0) to (-0.1,1);

\node[align=left] at (-0.4,0.5){$d~$};

\end{tikzpicture}
    \caption{ Relative position of TX and RX arrays at BS. Given that the TX and RX arrays are collocated, the far-field assumption that the signal impinges on the antenna
    array as a planar wave does not hold. Instead, for FD transceivers, it is more suitable to assume that the signal impinges on the array as a spherical wave for the near-field LOS channel.   }
     \label{array}
\end{figure}
Per Fig.~$\ref{array}$, the SI leakage at the BS is modeled by the channel matrix $\mathbf{H}_{\mathsf{s}}$. The separation, or transceiver gap, between  TX and RX arrays is defined by distance $d$ while the transceiver incline is determined by $\omega$.
The SI channel is decomposed into a static line-of-sight (LOS) channel modeled by $\mathbf{H}_{\mathsf{LOS}}$, which is derived from the geometry of the transceiver, and a non-line-of-sight (NLOS) channel described by $\mathbf{H}_{\mathsf{NLOS}}$ which follows the geometric channel model defined by (\ref{channel}). The ($q,p$)-th entry of the LOS SI leakage matrix can be written as 
\begin{equation}\label{eq2.2}
 [\mathbf{H}_{\mathsf{LOS}}]_{qp} = \frac{1}{d_{pq}}\mathsf{e}^{-j2\pi\frac{d_{pq}}{\lambda}}    
\end{equation}
Where $d_{pq}$ is the distance between the $p$-th antenna in the TX array and $q$-th antenna in the RX array at BS given by (\ref{distance}). The aggregate SI channel matrix can be obtained by
\begin{equation}\label{eq2.3}
\mathbf{H}_{\mathsf{s}} = \underbrace{\sqrt{\frac{\kappa}{\kappa+1}}\mathbf{H}_{\mathsf{LOS}}}_{\textsf{Near-Field}} + \underbrace{\sqrt{\frac{1}{\kappa+1}}\mathbf{H}_{\mathsf{NLOS}}}_{\textsf{Far-Field}}   
\end{equation}
Where $\kappa$ is the Rician factor.
\begin{figure*}[t]
\begin{equation}\label{distance}
\begin{split}
d_{pq} = \sqrt{ \left( \frac{d}{\tan(\omega)} + (q-1)\frac{\lambda}{2} \right)^2 + \left( \frac{d}{\sin(\omega)}+(p-1)\frac{\lambda}{2} \right)^2 -2\left( \frac{d}{\tan(\omega)} + (q-1)\frac{\lambda}{2} \right)\left( \frac{d}{\sin(\omega)}+(p-1)\frac{\lambda}{2} \right)\cos(\omega)   }    
\end{split}
\end{equation}
\vspace*{-.5cm}
\end{figure*}
\subsection{Signal Model}
Received signals at the IAB ($\mathbf{y}_{\mathsf{b}}$) and UE ($\mathbf{y}_{\mathsf{a}}$) are given by
\begin{equation}
\begin{split}
\mathbf{y}_{\mathsf{b}} =&  \underbrace{\sqrt{\rho_{\mathsf{b}}}\mathbf{W}_{\mathsf{IAB}}^*\mathbf{H}_{\mathsf{b}}\mathbf{F}_{\mathsf{gNB}}\mathbf{s}_{\mathsf{b}}}_{\textsf{Desired Signal}} + \underbrace{\sqrt{\rho_{\mathsf{s}}} \mathbf{W}_{\mathsf{IAB}}^*\mathbf{H}_{\mathsf{s}}\mathbf{F}_{\mathsf{IAB}}\mathbf{s}_{\mathsf{a}}}_{\textsf{Self-Interference Signal}}\\& +  \underbrace{\mathbf{W}_{\mathsf{IAB}}^* \mathbf{n}_{\mathsf{IAB}}}_{\textsf{AWGN}}
\end{split}
\end{equation}
\begin{equation}
\begin{split}
\mathbf{y}_{\mathsf{a}} =&  \underbrace{\sqrt{\rho_{\mathsf{a}}}\mathbf{W}_{\mathsf{UE}}^*\mathbf{H}_{\mathsf{a}}\mathbf{F}_{\mathsf{IAB}}\mathbf{s}_{\mathsf{a}}}_{\textsf{Desired Signal}} +  \underbrace{\mathbf{W}_{\mathsf{UE}}^* \mathbf{n}_{\mathsf{UE}}}_{\textsf{AWGN}}
\end{split}
\end{equation}
Where $\mathbf{W}_{\mathsf{IAB}} \in \mathbb{C}^{N_{\mathsf{IAB}}\times N_{\mathsf{s}}}$ and $\mathbf{F}_{\mathsf{IAB}} \in \mathbb{C}^{N_{\mathsf{IAB}}\times N_{\mathsf{s}}}$ are the all-digital combiner and precoder at the IAB node, respectively.  $\mathbf{W}_{\mathsf{UE}} \in \mathbb{C}^{N_{\mathsf{UE}}\times N_{\mathsf{s}}}$ and $\mathbf{F}_{\mathsf{gNB}} \in \mathbb{C}^{N_{\mathsf{gNB}}\times N_{\mathsf{s}}}$ being the all-digital combiner and precoder at the UE and gNB, respectively. Also, $N_{\mathsf{s}}$ is the number of spatial streams and $N_{\mathsf{X}}$ is the number of antennas at node $\mathsf{X}$.

\section{Beamforming Design}
The objective of designing of the beamformers is to maximize the received power for backhaul and access links and simultaneously reject the SI. In this work, we propose a hybrid analog/digital beamforming design wherein large amount of SI is suppressed in the analog domain to avoid the ADC saturation while residual SI is wiped out in the digital domain.
\subsection{Hybrid Beamforming: Analog Stage}
In this stage, we proceed to design the analog combiner $\mathbf{W}_{\mathsf{IAB}}^{\mathsf{RF}} \in \mathbb{C}^{N_{\mathsf{IAB}}\times N_{\mathsf{RF}}^{\mathsf{IAB}}}$ and precoder $\mathbf{F}_{\mathsf{IAB}}^{\mathsf{RF}} \in \mathbb{C}^{N_{\mathsf{IAB}}\times N_{\mathsf{RF}}^{\mathsf{IAB}}}$ at the IAB node as well as the analog combiner at UE $\mathbf{W}_{\mathsf{UE}}^{\mathsf{RF}} \in \mathbb{C}^{N_{\mathsf{UE}}\times N_{\mathsf{RF}}^{\mathsf{UE}}}$ and the analog precoder at the gNB $\mathbf{F}_{\mathsf{gNB}}^{\mathsf{RF}} \in \mathbb{C}^{N_{\mathsf{gNB}}\times N_{\mathsf{RF}}^{\mathsf{gNB}}}$, where $N_{\mathsf{RF}}^{\mathsf{X}}$ is the number of RF chains at node $\mathsf{X}$. To avoid the ADC saturation, large amount of SI has to be rejected in the analog domain which consequently requires a robust design. The covariance matrix of the precoded SI and noise  at the IAB node is expressed by
\begin{equation}\label{Rmatrix}
\mathbf{R}_{\mathsf{IAB}} = \rho_{\mathsf{s}}\mathbf{H}_{\mathsf{s}}\mathbf{F}_{\mathsf{IAB}}^{\mathsf{RF}} \mathbf{F}_{\mathsf{IAB}}^{\mathsf{RF}*} \mathbf{H}_{\mathsf{s}}^* + \sigma^2\mathbf{I}_{N_{\mathsf{IAB}}} 
\end{equation}
Where $\sigma^2$ is the noise variance. Our objective is to jointly design the analog combiners $\mathbf{W}_{\mathsf{IAB}}^{\mathsf{RF}}$, $\mathbf{W}_{\mathsf{UE}}^{\mathsf{RF}}$ and precoders $\mathbf{F}_{\mathsf{gNB}}^{\mathsf{RF}}$, $\mathbf{F}_{\mathsf{IAB}}^{\mathsf{RF}}$  to minimize the SI power at the IAB node and preserve the dimension of the signal space, i.e., $\mathsf{rank}\left(\mathbf{W}_{\mathsf{IAB}}^{\mathsf{RF}*}\mathbf{H}_{\mathsf{b}} \mathbf{F}_{\mathsf{gNB}}^{\mathsf{RF}} \right) = \mathrm{min}\left(N_{\mathsf{RF}}^{\mathsf{IAB}},N_{\mathsf{RF}}^{\mathsf{gNB}}\right)$ and $\mathsf{rank}\left(\mathbf{W}_{\mathsf{UE}}^{\mathsf{RF}*}\mathbf{H}_{\mathsf{a}} \mathbf{F}_{\mathsf{IAB}}^{\mathsf{RF}} \right) = \mathrm{min}\left(N_{\mathsf{RF}}^{\mathsf{UE}},N_{\mathsf{RF}}^{\mathsf{IAB}}\right)$. We formulate the optimization problem accordingly
\begin{equation}\label{problem1}
\begin{split}
\mathscr{P}_1:& \min\limits_{\mathbf{W}_{\mathsf{IAB}}^{\mathsf{RF}}} \mathrm{Tr}\left( \mathbf{W}_{\mathsf{IAB}}^{\mathsf{RF}*}\mathbf{R}_{\mathsf{IAB}}\mathbf{W}_{\mathsf{IAB}}^{\mathsf{RF}} \right)\\
\mathsf{s.t.}~& \mathbf{W}_{\mathsf{IAB}}^{\mathsf{RF}*}\mathbf{H}_{\mathsf{b}}\mathbf{F}_{\mathsf{gNB}}^{\mathsf{RF}} = \alpha \mathbf{I}_{N_{\mathsf{RF}}^{\mathsf{IAB}}}
\end{split}
\end{equation}
Where $\mathbf{R}_{\mathsf{IAB}}$ is a positive definite matrix ($\mathbf{R}_{\mathsf{IAB}} > 0$) and $\alpha = 1/\sqrt{\mathrm{Tr}\left( \mathbf{W}_{\mathsf{IAB}}^{\mathsf{RF}*} \mathbf{W}_{\mathsf{IAB}}^{\mathsf{RF}}\right)}$ is a power normalization coefficient.

To design the analog precoder at the IAB $\mathbf{F}_{\mathsf{IAB}}^{\mathsf{RF}}$, we proceed similarly as $\mathbf{W}_{\mathsf{IAB}}^{\mathsf{RF}}$. The covariance matrix of the combined SI and noise is expressed by
\begin{equation}\label{Smatrix}
\mathbf{S}_{\mathsf{IAB}} = \rho_{\mathsf{s}}  \mathbf{H}_{\mathsf{s}}^* \mathbf{W}_{\mathsf{IAB}}^{\mathsf{RF}} \mathbf{W}_{\mathsf{IAB}}^{\mathsf{RF*}} \mathbf{H}_{\mathsf{s}} + \sigma^2 \mathbf{I}_{N_\mathsf{IAB}}  
\end{equation}
Where ($\mathbf{S}_{\mathsf{IAB}} > 0$) is a positive definite matrix. Then, we formulate the problem accordingly
\begin{equation}\label{problem2}
\begin{split}
\mathscr{P}_2:& \min\limits_{\mathbf{F}_{\mathsf{IAB}}^{\mathsf{RF}}} \mathrm{Tr}\left( \mathbf{F}_{\mathsf{IAB}}^{\mathsf{RF}*}\mathbf{S}_{\mathsf{IAB}}\mathbf{F}_{\mathsf{IAB}}^{\mathsf{RF}} \right)\\
\mathsf{s.t.}~& \mathbf{W}_{\mathsf{UE}}^{\mathsf{RF}*}\mathbf{H}_{\mathsf{a}}\mathbf{F}_{\mathsf{IAB}}^{\mathsf{RF}} = \beta \mathbf{I}_{N_{\mathsf{RF}}^{\mathsf{IAB}}}
\end{split}
\end{equation}
Where $\beta = 1/\sqrt{\mathrm{Tr}\left( \mathbf{F}_{\mathsf{IAB}}^{\mathsf{RF}*} \mathbf{F}_{\mathsf{IAB}}^{\mathsf{RF}}\right)}$ is a power normalization coefficient.
\begin{theorem}\label{theorem1}
The optimal analog combiner and precoder at the IAB node, solutions to the problems (\ref{problem1}) and (\ref{problem2}) are expressed by
\begin{equation}\label{rfcombiab}
\mathbf{W}_{\mathsf{IAB}}^{\mathsf{RF}} = \alpha \mathbf{R}_{\mathsf{IAB}}^{-1}\mathbf{H}_{\mathsf{b}}\mathbf{F}_{\mathsf{gNB}}^{\mathsf{RF}}\left( \mathbf{F}_{\mathsf{gNB}}^{\mathsf{RF*}}\mathbf{H}_{\mathsf{b}}^* \mathbf{R}_{\mathsf{IAB}}^{-1} \mathbf{H}_{\mathsf{b}} \mathbf{F}_{\mathsf{gNB}}^{\mathsf{RF}}   \right)^{-1}    
\end{equation}
\begin{equation}\label{rfpreciab}
\mathbf{F}_{\mathsf{IAB}}^{\mathsf{RF}} = \beta \mathbf{S}_{\mathsf{IAB}}^{-1}\mathbf{H}_{\mathsf{a}}^*\mathbf{W}_{\mathsf{UE}}^{\mathsf{RF}*}\left( \mathbf{W}_{\mathsf{UE}}^{\mathsf{RF*}}\mathbf{H}_{\mathsf{a}} \mathbf{S}_{\mathsf{IAB}}^{-1} \mathbf{H}_{\mathsf{a}}^* \mathbf{W}_{\mathsf{UE}}^{\mathsf{RF}}   \right)^{-1}    
\end{equation}
\end{theorem}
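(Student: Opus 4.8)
The plan is to recognize $\mathscr{P}_1$ and $\mathscr{P}_2$ as two instances of the same template: minimize a quadratic form $\mathrm{Tr}(\mathbf{X}^*\mathbf{M}\mathbf{X})$ with $\mathbf{M}\succ 0$ subject to a linear equality $\mathbf{X}^*\mathbf{B}=c\,\mathbf{I}$ --- the classical Capon/MVDR-type distortionless beamforming program. I would solve this template once and then specialize. For $\mathscr{P}_1$ the identification is $\mathbf{M}=\mathbf{R}_{\mathsf{IAB}}$, $\mathbf{B}=\mathbf{H}_{\mathsf{b}}\mathbf{F}_{\mathsf{gNB}}^{\mathsf{RF}}$, $\mathbf{X}=\mathbf{W}_{\mathsf{IAB}}^{\mathsf{RF}}$, $c=\alpha$; for $\mathscr{P}_2$, after conjugating the constraint $\mathbf{W}_{\mathsf{UE}}^{\mathsf{RF}*}\mathbf{H}_{\mathsf{a}}\mathbf{F}_{\mathsf{IAB}}^{\mathsf{RF}}=\beta\mathbf{I}$ into the form $\mathbf{X}^*\mathbf{B}=c\,\mathbf{I}$, it is $\mathbf{M}=\mathbf{S}_{\mathsf{IAB}}$, $\mathbf{B}=\mathbf{H}_{\mathsf{a}}^*\mathbf{W}_{\mathsf{UE}}^{\mathsf{RF}}$, $\mathbf{X}=\mathbf{F}_{\mathsf{IAB}}^{\mathsf{RF}}$, $c=\beta$.

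For the template problem I would introduce a matrix Lagrange multiplier $\bm{\Lambda}$ for the equality constraint and form the real Lagrangian $\mathcal{L}=\mathrm{Tr}(\mathbf{X}^*\mathbf{M}\mathbf{X})-2\,\mathrm{Re}\,\mathrm{Tr}\!\left(\bm{\Lambda}^*(\mathbf{X}^*\mathbf{B}-c\,\mathbf{I})\right)$. Setting the Wirtinger derivative $\partial\mathcal{L}/\partial\mathbf{X}^*$ to zero (i.e.\ treating $\mathbf{X}$ and $\mathbf{X}^*$ as independent) gives the stationarity condition $\mathbf{M}\mathbf{X}=\mathbf{B}\bm{\Lambda}$, hence $\mathbf{X}=\mathbf{M}^{-1}\mathbf{B}\bm{\Lambda}$ because $\mathbf{M}\succ 0$ is invertible. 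Substituting into $\mathbf{X}^*\mathbf{B}=c\,\mathbf{I}$ yields $\bm{\Lambda}^*\mathbf{B}^*\mathbf{M}^{-1}\mathbf{B}=c\,\mathbf{I}$, so $\bm{\Lambda}=c\,(\mathbf{B}^*\mathbf{M}^{-1}\mathbf{B})^{-1}$ (the Gram-type matrix $\mathbf{B}^*\mathbf{M}^{-1}\mathbf{B}$ is Hermitian and, under the rank hypothesis below, invertible). Back-substitution then gives $\mathbf{X}=c\,\mathbf{M}^{-1}\mathbf{B}(\mathbf{B}^*\mathbf{M}^{-1}\mathbf{B})^{-1}$, which is precisely (\ref{rfcombiab}) under the first identification and (\ref{rfpreciab}) under the second.

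It remains to certify that this stationary point is the global optimum and to handle the technical bookkeeping. Since $\mathbf{R}_{\mathsf{IAB}}\succ 0$ (resp.\ $\mathbf{S}_{\mathsf{IAB}}\succ 0$), the objective is a strictly convex quadratic in the decision matrix and the feasible set is an affine subspace, so the program is convex with a unique minimizer; the KKT point computed above is therefore that minimizer. The part I expect to need the most care is not the calculus but the invertibility/feasibility argument: the closed form requires $\mathbf{B}^*\mathbf{M}^{-1}\mathbf{B}$ to be nonsingular, which holds exactly when $\mathbf{B}$ has full column rank, and this is where the dimension-preservation/rank conditions imposed just before $\mathscr{P}_1$ and $\mathscr{P}_2$ --- ensuring $\mathbf{W}_{\mathsf{IAB}}^{\mathsf{RF}*}\mathbf{H}_{\mathsf{b}}\mathbf{F}_{\mathsf{gNB}}^{\mathsf{RF}}$ and $\mathbf{W}_{\mathsf{UE}}^{\mathsf{RF}*}\mathbf{H}_{\mathsf{a}}\mathbf{F}_{\mathsf{IAB}}^{\mathsf{RF}}$ are square and full-rank --- enter; the same condition guarantees the equality constraint is feasible. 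Finally I would note that $\alpha$ (resp.\ $\beta$) is held fixed during the minimization and then pinned down self-consistently by $\alpha=1/\sqrt{\mathrm{Tr}(\mathbf{W}_{\mathsf{IAB}}^{\mathsf{RF}*}\mathbf{W}_{\mathsf{IAB}}^{\mathsf{RF}})}$ (resp.\ the analogous expression for $\beta$), a scalar normalization that leaves the solution's column span --- and hence the SI suppression it achieves --- unchanged.
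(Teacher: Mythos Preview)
Your proposal is correct and follows essentially the same Lagrangian/KKT route as the paper's proof: form the Lagrangian, set the Wirtinger derivative to zero to get $\mathbf{M}\mathbf{X}=\mathbf{B}\bm{\Lambda}$, substitute into the constraint to solve for the multiplier, and back-substitute. Your version is in fact more careful than the paper's---you use a matrix multiplier (the paper uses a scalar $x$, which is loose for a matrix-valued constraint), you certify global optimality via strict convexity, and you flag the full-column-rank requirement on $\mathbf{B}$ needed for the closed form---but the underlying argument is the same.
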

\begin{proof}
The proof of Theorem \ref{theorem1} is provided in Appendix \ref{appendixtheorem1}.
\end{proof}
In this design, the analog precoder at gNB $\mathbf{F}_{\mathsf{gNB}}^{\mathsf{RF}}$ and combiner at UE $\mathbf{F}_{\mathsf{gNB}}^{\mathsf{RF}}$ can be selected regardless of Problems (\ref{problem1}) and (\ref{problem2}). 
\begin{proposition}
The analog combiner at the UE that minimizes the SI and hence the Mean Square Error (MSE) is the Wiener filter or Linear Minimum (LMMSE) receiver $\mathbf{W}_{\mathsf{MMSE}}$. The filter design problem can be defined as
\begin{equation}
\begin{split}
\mathscr{P}_3:&\mathbf{W}_{\mathsf{MMSE}} = \argmin\limits_{\mathbf{W}}\mathbb{E}\left[\|\mathbf{s}-\mathbf{y}\|_2^2 \right]
\end{split}
\end{equation}
For the analog precoder at the gNB, we adopt the Regularized Zero-Forcing filter $\mathbf{F}_{\mathsf{RegZF}}$. The expressions of the analog combiner and precoder at the UE and gNB, respectively, are given by 
\begin{equation}\label{rfcombue}
\mathbf{W}_{\mathsf{UE}}^{\mathsf{RF}} = \left( \mathbf{H}_{\mathsf{a}}\mathbf{F}_{\mathsf{IAB}}^{\mathsf{RF}}\mathbf{F}_{\mathsf{IAB}}^{\mathsf{RF}*} \mathbf{H}_{\mathsf{a}}^* + \frac{N_{\mathsf{UE}}}{\mathsf{SNR}_{\mathsf{a}}}   \mathbf{I}_{N_\mathsf{UE}} \right)^{-1} \mathbf{H}_{\mathsf{a}}\mathbf{F}_{\mathsf{IAB}}^{\mathsf{RF}}     
\end{equation}
\begin{equation}\label{rfprecgnb}
\mathbf{F}_{\mathsf{gNB}}^{\mathsf{RF}} = \left( \mathbf{H}_{\mathsf{b}}^*\mathbf{W}_{\mathsf{IAB}}^{\mathsf{RF}}\mathbf{W}_{\mathsf{IAB}}^{\mathsf{RF}*} \mathbf{H}_{\mathsf{b}} + \frac{N_{\mathsf{IAB}}}{\mathsf{SNR}_{\mathsf{b}}}   \mathbf{I}_{N_\mathsf{IAB}} \right)^{-1} \mathbf{H}_{\mathsf{b}}^*\mathbf{W}_{\mathsf{IAB}}^{\mathsf{RF}}     
\end{equation}
\end{proposition}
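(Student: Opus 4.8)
The plan is to handle the two claims separately, because the UE combiner is pinned down by an exact optimality condition while the gNB precoder is a design choice that I would justify by duality rather than derive as a finite-dimensional optimum. In both cases the analog stage only needs to fix directions, so scalar prefactors and the normalization coefficients $\alpha$, $\beta$ of Problems~(\ref{problem1})--(\ref{problem2}) can be absorbed and need not be tracked.

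\textbf{UE combiner.} First I would write the pre-combining observation at the UE, $\tilde{\mathbf{y}}_{\mathsf{a}} = \sqrt{\rho_{\mathsf{a}}}\,\mathbf{H}_{\mathsf{a}}\mathbf{F}_{\mathsf{IAB}}^{\mathsf{RF}}\mathbf{s}_{\mathsf{a}} + \mathbf{n}_{\mathsf{UE}}$, so that $\mathbf{y}=\mathbf{W}^{*}\tilde{\mathbf{y}}_{\mathsf{a}}$ in $\mathscr{P}_3$. Expanding the quadratic gives $\mathbb{E}\!\left[\|\mathbf{s}_{\mathsf{a}}-\mathbf{W}^{*}\tilde{\mathbf{y}}_{\mathsf{a}}\|_2^2\right]=\mathrm{Tr}\!\left(\mathbf{R}_{ss}-\mathbf{W}^{*}\mathbf{R}_{\tilde y s}-\mathbf{R}_{\tilde y s}^{*}\mathbf{W}+\mathbf{W}^{*}\mathbf{R}_{\tilde y\tilde y}\mathbf{W}\right)$, where, using that $\mathbf{s}_{\mathsf{a}}$ is zero-mean with covariance $\mathbf{R}_{ss}$ and independent of the AWGN $\mathbf{n}_{\mathsf{UE}}$ (covariance $\sigma^2\mathbf{I}_{N_{\mathsf{UE}}}$), one has $\mathbf{R}_{\tilde y s}=\sqrt{\rho_{\mathsf{a}}}\,\mathbf{H}_{\mathsf{a}}\mathbf{F}_{\mathsf{IAB}}^{\mathsf{RF}}\mathbf{R}_{ss}$ and $\mathbf{R}_{\tilde y\tilde y}=\rho_{\mathsf{a}}\,\mathbf{H}_{\mathsf{a}}\mathbf{F}_{\mathsf{IAB}}^{\mathsf{RF}}\mathbf{R}_{ss}\mathbf{F}_{\mathsf{IAB}}^{\mathsf{RF}*}\mathbf{H}_{\mathsf{a}}^{*}+\sigma^2\mathbf{I}_{N_{\mathsf{UE}}}$. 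The cost is convex in $\mathbf{W}$; setting the Wirtinger gradient with respect to $\mathbf{W}^{*}$ to zero yields the normal equation $\mathbf{R}_{\tilde y\tilde y}\mathbf{W}=\mathbf{R}_{\tilde y s}$ (the orthogonality principle). Solving, taking the Hermitian transpose, and normalizing $\mathbf{R}_{ss}$ gives $\mathbf{W}_{\mathsf{MMSE}}=\sqrt{\rho_{\mathsf{a}}}\,(\rho_{\mathsf{a}}\mathbf{H}_{\mathsf{a}}\mathbf{F}_{\mathsf{IAB}}^{\mathsf{RF}}\mathbf{F}_{\mathsf{IAB}}^{\mathsf{RF}*}\mathbf{H}_{\mathsf{a}}^{*}+\sigma^2\mathbf{I}_{N_{\mathsf{UE}}})^{-1}\mathbf{H}_{\mathsf{a}}\mathbf{F}_{\mathsf{IAB}}^{\mathsf{RF}}$. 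Dividing numerator and denominator by $\rho_{\mathsf{a}}$ and identifying $\mathsf{SNR}_{\mathsf{a}}=N_{\mathsf{UE}}\rho_{\mathsf{a}}/\sigma^2$ turns the regularizer into $N_{\mathsf{UE}}/\mathsf{SNR}_{\mathsf{a}}$, reproducing (\ref{rfcombue}) up to the scalar $1/\sqrt{\rho_{\mathsf{a}}}$, which is absorbed by the subsequent digital stage.

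\textbf{gNB precoder.} For the second claim I would argue by uplink--downlink duality: once the IAB analog combiner $\mathbf{W}_{\mathsf{IAB}}^{\mathsf{RF}}$ is fixed, the gNB-to-IAB link has effective channel $\mathbf{W}_{\mathsf{IAB}}^{\mathsf{RF}*}\mathbf{H}_{\mathsf{b}}$, and the transmit filter minimizing the sum mean-square error under a total power constraint is, by a Lagrangian/KKT argument dual to the one above, a regularized channel inversion $\mathbf{F}\propto(\mathbf{H}_{\mathsf{b}}^{*}\mathbf{W}_{\mathsf{IAB}}^{\mathsf{RF}}\mathbf{W}_{\mathsf{IAB}}^{\mathsf{RF}*}\mathbf{H}_{\mathsf{b}}+\mu\mathbf{I}_{N_{\mathsf{IAB}}})^{-1}\mathbf{H}_{\mathsf{b}}^{*}\mathbf{W}_{\mathsf{IAB}}^{\mathsf{RF}}$ with $\mu$ tied to the power budget. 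Fixing $\mu=N_{\mathsf{IAB}}/\mathsf{SNR}_{\mathsf{b}}$ — the regularization known to be SINR-optimal in the large-array regime — gives exactly $\mathbf{F}_{\mathsf{RegZF}}$ of (\ref{rfprecgnb}), with the proportionality constant set by $\beta$ in Problem~(\ref{problem2}).

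\textbf{Main obstacle.} The delicate part is the bookkeeping of scalars and of the constants $N_{\mathsf{UE}}$, $N_{\mathsf{IAB}}$: one must verify that the textbook Wiener filter collapses precisely to the stated regularized form under the paper's SNR and power-normalization conventions, and clarify that "minimizing the SI" at the UE — where $\mathbf{y}_{\mathsf{a}}$ carries no explicit SI term — really means minimizing the noise-plus-residual-interference MSE, so that $\mathscr{P}_3$ is the appropriate surrogate. For the gNB precoder the honest difficulty is that RegZF is not the finite-dimensional MSE optimum for a fixed power constraint, so that step is a justification via duality plus asymptotic optimality of the regularization $N_{\mathsf{IAB}}/\mathsf{SNR}_{\mathsf{b}}$, which I would state as such rather than over-claim.
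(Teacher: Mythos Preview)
The paper does not actually prove Proposition~1: it is stated as a design choice, with no \texttt{proof} environment and no appendix reference; the formulas (\ref{rfcombue})--(\ref{rfprecgnb}) are simply asserted as the standard LMMSE receiver and regularized zero-forcing precoder. So there is nothing in the paper to compare your argument against at the level of derivation --- you are in fact supplying more than the authors do.

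Your derivation for the UE combiner is the textbook one and is correct in structure. One concrete slip: you write $\mathsf{SNR}_{\mathsf{a}}=N_{\mathsf{UE}}\rho_{\mathsf{a}}/\sigma^2$, but the paper defines $\mathsf{SNR}_{\mathsf{x}}=\rho_{\mathsf{x}}/\sigma^2$ immediately after the proposition. With that definition, the plain Wiener filter yields a regularizer $1/\mathsf{SNR}_{\mathsf{a}}$, not $N_{\mathsf{UE}}/\mathsf{SNR}_{\mathsf{a}}$, so the extra factor $N_{\mathsf{UE}}$ in (\ref{rfcombue}) does \emph{not} fall out of the unconstrained MSE minimization alone. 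You correctly flag this as the ``delicate'' bookkeeping, but your proposed resolution (redefining $\mathsf{SNR}_{\mathsf{a}}$) contradicts the paper; a cleaner justification is that the $N_{\mathsf{UE}}$ scaling matches the per-antenna power normalization implicit in the constant-amplitude projection (\ref{caconstraint}), or simply that the authors are using the large-system regularization constant rather than the exact finite-dimensional one. For the gNB precoder your duality-plus-asymptotic-regularization argument is exactly the right level of claim, and your caveat that RegZF is not the finite-$N$ MSE optimum is honest and appropriate --- the paper itself says only ``we adopt'' it.
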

Where $\mathsf{SNR}_{\mathsf{x}} = \frac{\rho_\mathsf{x}}{\sigma^2}$, $\mathsf{x} \in \{a, b\}$.

The analog beamformers designed in Eqs.~(\ref{rfcombiab}-\ref{rfprecgnb}) are unconstrained solutions, i.e., they do not satisfy the constant amplitude (CA) constraint. To satisfy such constraint, they have to be projected onto the subspace of the CA constraint. Equivalently, the unconstrained solutions are updated as follows
\begin{equation}\label{caconstraint}
\mathbf{X}_{\mathsf{RF}} \longleftarrow \frac{1}{\sqrt{N}}\exp\left(\mathsf{i} \angle{\mathbf{X}_{\mathsf{RF}}} \right)    
\end{equation}
Where $N$ and $\angle\mathbf{X}$ are the number of rows and angles of the complex matrix $\mathbf{X}$, respectively. 

\subsection{Hybrid Beamforming: Digital Stage}
Once the analog beamformers are designed to reject large amount of SI to avoid the ADC saturation, the analog cancellation is not perfect, i.e., there are some residual SI left over after the analog stage. The digital beamformers which are interpreted as the last line of defense come to further remove this residual SI. 
\begin{theorem}\label{theorem2}
The optimal digital beamformer $\mathbf{X}_{\mathsf{BB}}$ can be expressed in terms of the analog beamformer $\mathbf{X}_{\mathsf{RF}}$ as follows. We first apply the SVD $\mathbf{X}_{\mathsf{RF}} = \mathbf{U}_{\mathsf{RF}} \mathbf{S}_{\mathsf{RF}} \mathbf{V}_{\mathsf{RF}}^*$. Second we express $\mathbf{X}_{\mathsf{BB}} = \mathbf{V}_{\mathsf{RF}}\mathbf{S}_{\mathsf{RF}}^{-1}\mathbf{Q}_{\star}$, where the columns of $\mathbf{Q}_{\star} \in \mathbb{C}^{M \times N}$ comprise the $N$ dominant left singular vectors of $\mathbf{U}_{\mathsf{RF}}^* \mathbf{A}$. Note that $\mathbf{X}_{\mathsf{RF}} \in \mathbb{C}^{M\times L}$, $\mathbf{X}_{\mathsf{BB}} \in \mathbb{C}^{L\times N}$ and $\mathbf{A} \in \mathbb{C}^{M\times N}$. 
\end{theorem}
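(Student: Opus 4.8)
The plan is to read the digital stage as a constrained Frobenius-norm approximation of the all-digital beamformer: with $\mathbf{X}_{\mathsf{RF}}$ frozen from the analog stage, the digital matrix $\mathbf{X}_{\mathsf{BB}}$ is chosen so that the effective hybrid beamformer $\mathbf{X}_{\mathsf{RF}}\mathbf{X}_{\mathsf{BB}}$ is as close as possible to the optimal all-digital matrix $\mathbf{A}$ while remaining a legitimate beamformer, i.e. with orthonormal columns so that the $N$ residual-SI--cleaned streams are decoupled and unit power. Accordingly I would take the problem to be $\min_{\mathbf{X}_{\mathsf{BB}}}\|\mathbf{A}-\mathbf{X}_{\mathsf{RF}}\mathbf{X}_{\mathsf{BB}}\|_F^2$ subject to $(\mathbf{X}_{\mathsf{RF}}\mathbf{X}_{\mathsf{BB}})^*(\mathbf{X}_{\mathsf{RF}}\mathbf{X}_{\mathsf{BB}})=\mathbf{I}_N$, under the standing assumptions $N\le L\le M$ and $\mathbf{X}_{\mathsf{RF}}$ of full column rank. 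Dropping the orthonormality constraint would instead give the plain pseudoinverse solution $\mathbf{V}_{\mathsf{RF}}\mathbf{S}_{\mathsf{RF}}^{-1}\mathbf{U}_{\mathsf{RF}}^*\mathbf{A}$, which does not display the singular-vector structure in the statement, so the constraint is what drives the result.

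First I would insert the compact SVD $\mathbf{X}_{\mathsf{RF}}=\mathbf{U}_{\mathsf{RF}}\mathbf{S}_{\mathsf{RF}}\mathbf{V}_{\mathsf{RF}}^*$ (with $\mathbf{U}_{\mathsf{RF}}\in\mathbb{C}^{M\times L}$ having orthonormal columns, $\mathbf{S}_{\mathsf{RF}}\in\mathbb{C}^{L\times L}$ positive diagonal, $\mathbf{V}_{\mathsf{RF}}\in\mathbb{C}^{L\times L}$ unitary) and make the invertible change of variables $\mathbf{G}=\mathbf{S}_{\mathsf{RF}}\mathbf{V}_{\mathsf{RF}}^*\mathbf{X}_{\mathsf{BB}}\in\mathbb{C}^{L\times N}$. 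Then $\mathbf{X}_{\mathsf{RF}}\mathbf{X}_{\mathsf{BB}}=\mathbf{U}_{\mathsf{RF}}\mathbf{G}$, so using $\mathbf{U}_{\mathsf{RF}}^*\mathbf{U}_{\mathsf{RF}}=\mathbf{I}_L$ the constraint collapses to $\mathbf{G}^*\mathbf{G}=\mathbf{I}_N$ and $\|\mathbf{A}-\mathbf{U}_{\mathsf{RF}}\mathbf{G}\|_F^2=\|\mathbf{A}\|_F^2+N-2\,\mathrm{Re}\,\mathrm{Tr}\big((\mathbf{U}_{\mathsf{RF}}^*\mathbf{A})^*\mathbf{G}\big)$. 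Hence the problem is equivalent to maximizing $\mathrm{Re}\,\mathrm{Tr}(\mathbf{B}^*\mathbf{G})$ over all $\mathbf{G}\in\mathbb{C}^{L\times N}$ with $\mathbf{G}^*\mathbf{G}=\mathbf{I}_N$, where $\mathbf{B}=\mathbf{U}_{\mathsf{RF}}^*\mathbf{A}$.

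The heart of the argument is the complex semi-unitary Procrustes lemma. Writing the compact SVD $\mathbf{B}=\mathbf{U}_B\bm{\Sigma}_B\mathbf{V}_B^*$ and setting $\mathbf{M}=\mathbf{U}_B^*\mathbf{G}\mathbf{V}_B\in\mathbb{C}^{N\times N}$, trace cyclicity gives $\mathrm{Re}\,\mathrm{Tr}(\mathbf{B}^*\mathbf{G})=\sum_{i=1}^N\sigma_i(\mathbf{B})\,\mathrm{Re}[\mathbf{M}]_{ii}$; since $\|\mathbf{M}\|_2\le\|\mathbf{U}_B^*\|_2\,\|\mathbf{G}\|_2\,\|\mathbf{V}_B\|_2=1$, each diagonal entry obeys $\mathrm{Re}[\mathbf{M}]_{ii}\le|[\mathbf{M}]_{ii}|\le1$, so $\mathrm{Re}\,\mathrm{Tr}(\mathbf{B}^*\mathbf{G})\le\sum_i\sigma_i(\mathbf{B})$. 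With $\mathbf{B}$ of full column rank $N$ (all $\sigma_i>0$), equality holds iff $\mathrm{Re}[\mathbf{M}]_{ii}=1$ for every $i$, and together with $\|\mathbf{M}\|_2\le1$ this forces $\mathbf{M}=\mathbf{I}_N$ (from $\|\mathbf{M}\mathbf{e}_i-\mathbf{e}_i\|^2=\|\mathbf{M}\mathbf{e}_i\|^2-2\,\mathrm{Re}[\mathbf{M}]_{ii}+1\le0$), whence, using that $\mathbf{G}$ is semi-unitary, $\mathbf{G}=\mathbf{G}_\star=\mathbf{U}_B\mathbf{V}_B^*$. Undoing the change of variables, $\mathbf{X}_{\mathsf{BB}}=\mathbf{V}_{\mathsf{RF}}\mathbf{S}_{\mathsf{RF}}^{-1}\mathbf{G}_\star=\mathbf{V}_{\mathsf{RF}}\mathbf{S}_{\mathsf{RF}}^{-1}\mathbf{Q}_{\star}$ with $\mathbf{Q}_{\star}=\mathbf{U}_B\mathbf{V}_B^*$, whose columns are the $N$ dominant left singular vectors of $\mathbf{U}_{\mathsf{RF}}^*\mathbf{A}$ up to the unitary factor $\mathbf{V}_B^*$ (an inconsequential rotation of the data streams that may be folded into $\mathbf{Q}_{\star}$), matching the claimed form; note also $\mathbf{X}_{\mathsf{RF}}\mathbf{X}_{\mathsf{BB}}=\mathbf{U}_{\mathsf{RF}}\mathbf{Q}_{\star}$ then automatically has orthonormal columns.

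I expect the main obstacles to be (i) pinning down and justifying the implicit problem, in particular the orthonormal-columns constraint on the effective beamformer, without which the singular-vector structure does not appear; and (ii) the rank bookkeeping for the non-square SVDs: one needs $N\le L$ (enough RF chains and streams) and $\mathsf{rank}(\mathbf{U}_{\mathsf{RF}}^*\mathbf{A})=N$ for $\mathbf{V}_B$ to be a square unitary and the optimizer unique, whereas in rank-deficient cases $\mathbf{Q}_{\star}$ is only determined on a complementary null space while the optimal value $\sum_i\sigma_i(\mathbf{U}_{\mathsf{RF}}^*\mathbf{A})$ is unchanged. The Procrustes bound itself and its equality case are then routine.
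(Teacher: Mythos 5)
Your argument is internally sound and your change of variables $\mathbf{Q}=\mathbf{S}_{\mathsf{RF}}\mathbf{V}_{\mathsf{RF}}^*\mathbf{X}_{\mathsf{BB}}$ is exactly the paper's, but the optimization you solve is genuinely different from the one the paper solves. The paper takes the objective to be the spectral efficiency itself: after the same substitution it maximizes $\log\det\left(\mathbf{I}_N+\rho\,\mathbf{Q}^*\mathbf{U}_{\mathsf{RF}}^*\mathbf{A}\mathbf{A}^*\mathbf{U}_{\mathsf{RF}}\mathbf{Q}\right)$ over $\mathbf{Q}^*\mathbf{Q}=\mathbf{I}_N$ as in (\ref{opt2}), invokes (without proof) the Rayleigh--Ritz/Ky Fan fact that the maximizer is the matrix of $N$ dominant eigenvectors of $\mathbf{B}\mathbf{B}^*$ with $\mathbf{B}=\mathbf{U}_{\mathsf{RF}}^*\mathbf{A}$, i.e.\ the dominant left singular vectors of $\mathbf{B}$, and then derives the benchmarking bound (\ref{opt3}) together with the condition on $\mathbf{U}_{\mathsf{RF}}$ for tightness. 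You instead posit a Frobenius-norm matching problem and solve the semi-unitary Procrustes problem, landing on $\mathbf{Q}_\star=\mathbf{U}_B\mathbf{V}_B^*$ rather than $\mathbf{U}_B$. The two answers span the same column space, and the trailing unitary $\mathbf{V}_B^*$ leaves the log-det invariant, so your beamformer achieves the same spectral efficiency as the paper's; strictly, though, it is not ``the $N$ dominant left singular vectors'' as the theorem states, and the rotation is \emph{not} inconsequential for your own Frobenius objective (only $\mathbf{U}_B\mathbf{V}_B^*$ attains the minimum distance), so declaring it foldable into $\mathbf{Q}_\star$ quietly switches from your criterion back to the paper's. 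What your route buys is a complete, self-contained optimality and equality-case analysis where the paper merely asserts the key step; what the paper's route buys is that it optimizes the actual performance metric directly and obtains the upper bound (\ref{opt3}) as a by-product, which the Procrustes formulation does not give. Your rank and dimension caveats ($N\le L$, full column rank of $\mathbf{U}_{\mathsf{RF}}^*\mathbf{A}$, and implicitly that $\mathbf{Q}_\star$ should be $L\times N$ rather than the stated $M\times N$) are apposite and are glossed over in the paper.
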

\begin{proof}
The proof of Theorem \ref{theorem2} is reported in Appendix \ref{appendixtheorem2}.
\end{proof}

\subsection{All-Digital Beamforming}
In mmWave communications, all-digital beamforming using one RF chain per antenna with full precision data converters is not a practical design. Although it achieves high spectral efficiency, it is not energy efficient. However, such a design may serve as a benchmarking tool to measure the efficacy of the proposed hybrid beamforming design.  To this end, we design an all-digital beamformer to cancel SI and maximize the sum spectral efficiency by extending the routine we are using for analog beamforming design. The first extension is the all-digital beamformer $\mathbf{X} \in \mathbb{C}^{M\times N}$ would have different dimensions compared to the analog beamfomers, where $M$ and $N$ are the numbers of antennas and spatial streams, respectively. The second extension is that the all-digital beamformer design is unconstrained; i.e., the CA constraint does not exist for such a design.
\begin{remark}
Given the all-digital beamformer solution $\mathbf{X} \in \mathbb{C}^{M\times N}$, $M$ and $N$ are the number of antennas and spatial streams, respectively. $M$ should be large enough to sustain $N$ spatial streams and the remaining $P = M-N$ degrees of freedom should dedicated to suppress the SI.
\end{remark}

We introduce the expressions of the spectral efficiency for the backhaul and access links, respectively, as follows
\begin{equation}\label{ratebackhaul}
\mathcal{I}_{\mathsf{b}} = \log\det\left(\mathbf{I}_{N_\mathsf{s}} +\rho_{\mathsf{b}}\mathbf{W}_{\mathsf{IAB}}^*\mathbf{H}_\mathsf{b}\mathbf{F}_{\mathsf{gNB}}\mathbf{Q}_\mathsf{b}^{-1} \mathbf{F}_{\mathsf{gNB}}^*\mathbf{H}_\mathsf{b}^*  \mathbf{W}_{\mathsf{IAB}}\right)    
\end{equation}
\begin{equation}\label{rateaccess}
\mathcal{I}_{\mathsf{a}} = \log\det\left(\mathbf{I}_{N_\mathsf{s}} +\rho_{\mathsf{a}}\mathbf{W}_{\mathsf{UE}}^*\mathbf{H}_\mathsf{a}\mathbf{F}_{\mathsf{IAB}}\mathbf{Q}_\mathsf{a}^{-1} \mathbf{F}_{\mathsf{IAB}}^*\mathbf{H}_\mathsf{a}^*  \mathbf{W}_{\mathsf{UE}}\right)    
\end{equation}
Where $\mathbf{Q}_\mathsf{b}$ is the covariance matrix of the SI and noise power for the backhaul link and $\mathbf{Q}_\mathsf{a}$ is the covariance matrix of the noise power for the access link, respectively given by 
\begin{equation}
 \mathbf{Q}_\mathsf{b} = \rho_\mathsf{s} \mathbf{W}_{\mathsf{IAB}}^* \mathbf{H}_{\mathsf{s}} \mathbf{F}_{\mathsf{IAB}} \mathbf{F}_{\mathsf{IAB}}^*\mathbf{H}_{\mathsf{s}}^*\mathbf{W}_{\mathsf{IAB}}+ \sigma^2 \mathbf{W}_{\mathsf{IAB}}^*\mathbf{W}_{\mathsf{IAB}}
\end{equation}
\begin{equation}
 \mathbf{Q}_\mathsf{a} = \sigma^2 \mathbf{W}_{\mathsf{IAB}}^*\mathbf{W}_{\mathsf{IAB}}
\end{equation}

\begin{lemma}
For the interference-free case, the optimal beamformers diagonalize the channel. By applying the SVD on the channel, we retrieve the singular values and extract the first $N_{\mathsf{s}}$ modes associated with the spatial streams. The upper bound for backhaul or access link is given by
\begin{equation}\label{upperbound}
\mathcal{I}_{\mathsf{Bound}}=  \sum_{\ell=0}^{N_{\mathsf{s}}-1}\log\left(1 +  \sigma_{\ell}\left(\mathbf{H}  \right)^2 \mathsf{SNR} \right)  
\end{equation}
\end{lemma}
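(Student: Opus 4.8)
The plan is to specialize the rate in (\ref{ratebackhaul}) (and, identically, (\ref{rateaccess})) to the interference-free regime, reduce it to a classical rank-$N_{\mathsf{s}}$ MIMO mutual information, and then invoke eigenvalue interlacing to pin down the maximizer. Switching off the SI amounts to $\rho_{\mathsf{s}}\to 0$, which makes $\mathbf{Q}_{\mathsf{b}}$ collapse to $\sigma^{2}\mathbf{W}_{\mathsf{IAB}}^{*}\mathbf{W}_{\mathsf{IAB}}$ (and $\mathbf{Q}_{\mathsf{a}}$ already has this form). Abbreviating $\mathbf{W}=\mathbf{W}_{\mathsf{IAB}}$, $\mathbf{F}=\mathbf{F}_{\mathsf{gNB}}$, $\mathbf{G}=\sqrt{\rho_{\mathsf{b}}}\,\mathbf{H}\mathbf{F}$, $\mathsf{SNR}=\rho_{\mathsf{b}}/\sigma^{2}$, and normalizing the precoder so that $\mathbf{F}^{*}\mathbf{F}=\mathbf{I}_{N_{\mathsf{s}}}$ (unit power per stream, as in the statement), repeated use of $\det(\mathbf{I}+\mathbf{A}\mathbf{B})=\det(\mathbf{I}+\mathbf{B}\mathbf{A})$ turns (\ref{ratebackhaul}) into $\mathcal{I}_{\mathsf{b}}=\log\det\!\bigl(\mathbf{I}_{N_{\mathsf{s}}}+\sigma^{-2}\mathbf{G}^{*}\mathbf{P}_{\mathbf{W}}\mathbf{G}\bigr)$, where $\mathbf{P}_{\mathbf{W}}=\mathbf{W}(\mathbf{W}^{*}\mathbf{W})^{-1}\mathbf{W}^{*}$ is the orthogonal projector onto $\mathrm{col}(\mathbf{W})$.

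The next step is to discard the combiner. Since $\mathbf{P}_{\mathbf{W}}$ is a projector we have $\mathbf{0}\preceq\mathbf{P}_{\mathbf{W}}\preceq\mathbf{I}$, hence $\mathbf{G}^{*}\mathbf{P}_{\mathbf{W}}\mathbf{G}\preceq\mathbf{G}^{*}\mathbf{G}$, and monotonicity of $\det(\cdot)$ on the PSD cone gives $\mathcal{I}_{\mathsf{b}}\le\log\det\!\bigl(\mathbf{I}_{N_{\mathsf{s}}}+\sigma^{-2}\mathbf{G}^{*}\mathbf{G}\bigr)=\log\det\!\bigl(\mathbf{I}_{N_{\mathsf{s}}}+\mathsf{SNR}\,\mathbf{F}^{*}\mathbf{H}^{*}\mathbf{H}\mathbf{F}\bigr)$, with equality as soon as $\mathrm{col}(\mathbf{H}\mathbf{F})\subseteq\mathrm{col}(\mathbf{W})$ -- i.e.\ the combiner keeps the entire useful subspace, which is precisely the diagonalizing choice. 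It then remains to bound the residual quantity over semi-unitary precoders. By the Poincar\'{e} separation theorem, the ordered eigenvalues of the compression satisfy $\lambda_{\ell}(\mathbf{F}^{*}\mathbf{H}^{*}\mathbf{H}\mathbf{F})\le\lambda_{\ell}(\mathbf{H}^{*}\mathbf{H})=\sigma_{\ell}(\mathbf{H})^{2}$ for $\ell=0,\dots,N_{\mathsf{s}}-1$; writing the determinant as the product of the $1+\mathsf{SNR}\,\lambda_{\ell}(\mathbf{F}^{*}\mathbf{H}^{*}\mathbf{H}\mathbf{F})$ and using monotonicity of $\log(1+\mathsf{SNR}\cdot x)$ yields $\mathcal{I}_{\mathsf{b}}\le\sum_{\ell=0}^{N_{\mathsf{s}}-1}\log\!\bigl(1+\sigma_{\ell}(\mathbf{H})^{2}\,\mathsf{SNR}\bigr)=\mathcal{I}_{\mathsf{Bound}}$.

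Finally I would check that the bound is attained: taking the columns of $\mathbf{F}$ to be the $N_{\mathsf{s}}$ dominant right singular vectors of $\mathbf{H}$ and those of $\mathbf{W}$ the matching left singular vectors gives $\mathbf{W}^{*}\mathbf{H}\mathbf{F}=\mathrm{diag}\bigl(\sigma_{0}(\mathbf{H}),\dots,\sigma_{N_{\mathsf{s}}-1}(\mathbf{H})\bigr)$, saturating both inequalities above, so the supremum equals $\mathcal{I}_{\mathsf{Bound}}$ and the optimal beamformers indeed diagonalize $\mathbf{H}$; the access-link claim follows identically with $\mathbf{H}_{\mathsf{a}}$, $\mathbf{W}_{\mathsf{UE}}$, $\mathbf{F}_{\mathsf{IAB}}$ and $\mathsf{SNR}_{\mathsf{a}}=\rho_{\mathsf{a}}/\sigma^{2}$. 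The main obstacle is the interlacing step coupled with the power-normalization bookkeeping: the lemma states a per-stream bound with one common $\mathsf{SNR}$, so one has to commit to the convention $\mathbf{F}^{*}\mathbf{F}=\mathbf{I}_{N_{\mathsf{s}}}$ (rather than a trace-power budget, which would instead produce a water-filling expression) for the multiplier of $\sigma_{\ell}(\mathbf{H})^{2}$ to come out exactly as $\mathsf{SNR}$; everything else is routine matrix algebra together with the two standard facts (monotonicity of $\det$ on the PSD cone and the Poincar\'{e} separation theorem) invoked above.
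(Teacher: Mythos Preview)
Your argument is correct. You specialize the rate to $\rho_{\mathsf{s}}=0$, reduce it via Sylvester's identity to a projection form, bound out the combiner using $\mathbf{P}_{\mathbf{W}}\preceq\mathbf{I}$, and then apply Poincar\'e separation to the semi-unitary compression $\mathbf{F}^{*}\mathbf{H}^{*}\mathbf{H}\mathbf{F}$; achievability by the top-$N_{\mathsf{s}}$ singular vectors closes the loop. The caveat you flag about committing to $\mathbf{F}^{*}\mathbf{F}=\mathbf{I}_{N_{\mathsf{s}}}$ (equal per-stream power, no water-filling) is exactly the right consistency check for the stated form of the bound.

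As for comparison with the paper: there is nothing to compare against. The paper does not supply a proof for this lemma; the justification is folded into the statement itself (``the optimal beamformers diagonalize the channel \ldots\ applying the SVD \ldots\ extract the first $N_{\mathsf{s}}$ modes''), i.e., it is invoked as a standard MIMO capacity fact used purely as a benchmark. Your write-up is therefore a rigorous expansion of what the paper leaves implicit, and the SVD-based achievability you give is precisely the construction the lemma alludes to.
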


\begin{algorithm}[t]
\caption{Hybrid Beamforming Design}
\label{hybrid-beamforming}
\begin{algorithmic}[1]
\Function{Digital}{$\mathbf{X}_{\mathsf{RF}},\mathbf{A},N$} \funclabel{alg:a2}
\label{alg:a-line2}
    \State Compute SVD $\mathbf{X}_{\mathsf{RF}} = \mathbf{U}_{\mathsf{RF}} \mathbf{S}_{\mathsf{RF}}\mathbf{V}^*_{\mathsf{RF}}$
        \
    \State $\mathbf{Q} \gets N~\text{Dominant left singular vectors of } \mathbf{U}^*_{\mathsf{RF}}\mathbf{A}$
        
    \State $\mathbf{X}_{\mathsf{BB}} \gets \mathbf{V}_{\mathsf{RF}} \mathbf{S}^{-1}_{\mathsf{RF}} \mathbf{Q} $ 
    \State \Return $\mathbf{X}_{\mathsf{BB}}$
\EndFunction
\Statex
\State \textbf{Input} $\mathbf{H}_{\mathsf{s}} ,\mathbf{H}_{\mathsf{a}} ,\mathbf{H}_{\mathsf{b}}$
\State \textbf{Initialize} $\mathbf{F}_{\mathsf{gNB}}^{\mathsf{RF}}$, $\mathbf{F}_{\mathsf{IAB}}^{\mathsf{RF}}$, $\mathbf{W}_{\mathsf{IAB}}^{\mathsf{RF}}$, $\mathbf{W}_{\mathsf{UE}}^{\mathsf{RF}}$, $\mathbf{F}_{\mathsf{gNB}}^{\mathsf{BB}} $, $\mathbf{F}_{\mathsf{IAB}}^{\mathsf{BB}} $, $\mathbf{W}_{\mathsf{IAB}}^{\mathsf{BB}} $, $\mathbf{W}_{\mathsf{UE}}^{\mathsf{BB}} $
\State Set $\mathbf{W}_{\mathsf{IAB}} \gets \mathbf{W}_{\mathsf{IAB}}^{\mathsf{RF}}\mathbf{W}_{\mathsf{IAB}}^{\mathsf{BB}}$, $\mathbf{F}_{\mathsf{IAB}} \gets \mathbf{F}_{\mathsf{IAB}}^{\mathsf{RF}}\mathbf{F}_{\mathsf{IAB}}^{\mathsf{BB}}$, $\mathbf{W}_{\mathsf{IAB}} \gets \mathbf{W}_{\mathsf{UE}}^{\mathsf{RF}}\mathbf{W}_{\mathsf{UE}}^{\mathsf{BB}}$ and $\mathbf{F}_{\mathsf{gNB}} \gets \mathbf{F}_{\mathsf{gNB}}^{\mathsf{RF}}\mathbf{F}_{\mathsf{gNB}}^{\mathsf{BB}}$. 
\State Compute $\mathbf{R}_{\mathsf{IAB}}$ and $\mathbf{S}_{\mathsf{IAB}}$ from (\ref{Rmatrix}) and (\ref{Smatrix}).
\State Obtain $\mathbf{W}_{\mathsf{IAB}}^{\mathsf{RF}}$ and $\mathbf{F}_{\mathsf{IAB}}^{\mathsf{RF}}$ from (\ref{rfcombiab}) and (\ref{rfpreciab}).
\State Obtain $\mathbf{W}_{\mathsf{UE}}^{\mathsf{RF}}$ and $\mathbf{F}_{\mathsf{gNB}}^{\mathsf{RF}}$ from (\ref{rfcombue}) and (\ref{rfprecgnb}).
\State Project the analog beamformers on the CA subspace (\ref{caconstraint}).
\State $\mathbf{W}_{\mathsf{IAB}}^{\mathsf{BB}} \gets $ \textsc{Digital}$\left(\mathbf{W}_{\mathsf{IAB}}^{\mathsf{RF}},\mathbf{H}_{\mathsf{b}}\mathbf{F}_{\mathsf{gNB}},N_{\mathsf{s}}\right)$.
\State $\mathbf{F}_{\mathsf{IAB}}^{\mathsf{BB}} \gets $ \textsc{Digital}$\left(\mathbf{F}_{\mathsf{IAB}}^{\mathsf{RF}},\mathbf{H}_{\mathsf{a}}^*\mathbf{W}_{\mathsf{UE}},N_{\mathsf{s}}\right)$.
\State $\mathbf{W}_{\mathsf{UE}}^{\mathsf{BB}} \gets $ \textsc{Digital}$\left(\mathbf{W}_{\mathsf{UE}}^{\mathsf{RF}},\mathbf{H}_{\mathsf{a}}\mathbf{F}_{\mathsf{IAB}},N_{\mathsf{s}}\right)$.
\State $\mathbf{F}_{\mathsf{gNB}}^{\mathsf{BB}} \gets $ \textsc{Digital}$\left(\mathbf{F}_{\mathsf{gNB}}^{\mathsf{RF}},\mathbf{H}_{\mathsf{b}}^*\mathbf{W}_{\mathsf{IAB}},N_{\mathsf{s}}\right)$.
\State Set $\mathbf{W}_{\mathsf{IAB}} \gets \mathbf{W}_{\mathsf{IAB}}^{\mathsf{RF}}\mathbf{W}_{\mathsf{IAB}}^{\mathsf{BB}}$, $\mathbf{F}_{\mathsf{IAB}} \gets \mathbf{F}_{\mathsf{IAB}}^{\mathsf{RF}}\mathbf{F}_{\mathsf{IAB}}^{\mathsf{BB}}$, $\mathbf{W}_{\mathsf{IAB}} \gets \mathbf{W}_{\mathsf{UE}}^{\mathsf{RF}}\mathbf{W}_{\mathsf{UE}}^{\mathsf{BB}}$ and $\mathbf{F}_{\mathsf{gNB}} \gets \mathbf{F}_{\mathsf{gNB}}^{\mathsf{RF}}\mathbf{F}_{\mathsf{gNB}}^{\mathsf{BB}}$. 
\State Repeat Steps (10-18) until the convergence of (\ref{problem1}) and (\ref{problem2}).
\State \Return $\mathbf{F}_{\mathsf{gNB}}^{\mathsf{RF}}$, $\mathbf{F}_{\mathsf{IAB}}^{\mathsf{RF}}$, $\mathbf{W}_{\mathsf{IAB}}^{\mathsf{RF}}$, $\mathbf{W}_{\mathsf{UE}}^{\mathsf{RF}}$, $\mathbf{F}_{\mathsf{gNB}}^{\mathsf{BB}} $, $\mathbf{F}_{\mathsf{IAB}}^{\mathsf{BB}} $, $\mathbf{W}_{\mathsf{IAB}}^{\mathsf{BB}} $, $\mathbf{W}_{\mathsf{UE}}^{\mathsf{BB}} $
\end{algorithmic}
\end{algorithm}

\subsection{Convergence}
In this subsection, we prove the convergence of the proposed algorithm \ref{hybrid-beamforming}. Since the digital beamforming solutions are derived in terms of the analog beamformers, the convergence of the hybrid analog/digital beamforming algorithm depends on the convergence of the analog solutions themselves. In other terms, it is sufficient to prove the convergence of the objective functions in (\ref{problem1}) and (\ref{problem2}). We show that the objective function decreases in each iteration and converges to the local optimum in a few iterations, which makes it computationally efficient. The total SI plus noise power at the IAB node, i.e., the objective function of (\ref{problem1}) is given by
\begin{equation}\label{obj1}
\begin{split}
\mathbf{R} =& \mathrm{Tr}\left( \mathbf{W}_{\mathsf{IAB}}^{\mathsf{RF}*}\mathbf{R}_{\mathsf{IAB}}\mathbf{W}_{\mathsf{IAB}}^{\mathsf{RF}} \right)\\=&
\mathrm{Tr}\left(\mathbf{W}_{\mathsf{IAB}}^{\mathsf{RF}*} \left(\rho_{\mathsf{s}}\mathbf{H}_{\mathsf{s}}\mathbf{F}_{\mathsf{IAB}}^{\mathsf{RF}} \mathbf{F}_{\mathsf{IAB}}^{\mathsf{RF}*}\mathbf{H}_{\mathsf{s}}^* + \sigma^2 \mathbf{I}_{N_\mathsf{IAB}} \right) \mathbf{W}_{\mathsf{IAB}}^{\mathsf{RF}}   \right)\\=& \underbrace{\mathrm{Tr}\left(\rho_{\mathsf{s}} \mathbf{W}_{\mathsf{IAB}}^{\mathsf{RF}*}\mathbf{H}_{\mathsf{s}}\mathbf{F}_{\mathsf{IAB}}^{\mathsf{RF}}\mathbf{F}_{\mathsf{IAB}}^{\mathsf{RF}*}\mathbf{H}_{\mathsf{s}}^*\mathbf{W}_{\mathsf{IAB}}^{\mathsf{RF}}  \right)}_{\textsf{Effective SI Power}~\left(\mathcal{J}\right)}  + \sigma^2N_{\mathsf{RF}}.
\end{split}
\end{equation}
Similarly, the SI plus noise power defined in (\ref{problem2}) is given by
\begin{equation}\label{obj2}
\begin{split}
\mathbf{S} =& \mathrm{Tr}\left( \mathbf{F}_{\mathsf{IAB}}^{\mathsf{RF}*}\mathbf{S}_{\mathsf{IAB}}\mathbf{F}_{\mathsf{IAB}}^{\mathsf{RF}} \right)\\=& \mathrm{Tr}\left(\mathbf{F}_{\mathsf{IAB}}^{\mathsf{RF}*}\left(\rho_{\mathsf{s}}\mathbf{H}_{\mathsf{s}}^*\mathbf{W}_{\mathsf{IAB}}^{\mathsf{RF}} \mathbf{W}_{\mathsf{IAB}}^{\mathsf{RF}*}\mathbf{H}_{\mathsf{s}} +\sigma^2 \mathbf{I}_{N_\mathsf{IAB}}  \right) \mathbf{F}_{\mathsf{IAB}}^{\mathsf{RF}}\right)\\=&
\mathrm{Tr}\left(\rho_{\mathsf{s}} \mathbf{F}_{\mathsf{IAB}}^{\mathsf{RF}*} \mathbf{H}_{\mathsf{s}}^* \mathbf{W}_{\mathsf{IAB}}^{\mathsf{RF}}\mathbf{W}_{\mathsf{IAB}}^{\mathsf{RF}*} \mathbf{H}_{\mathsf{s}}\mathbf{F}_{\mathsf{IAB}}^{\mathsf{RF}}\right) +\sigma^2N_{\mathsf{RF}}.
\end{split}
\end{equation}
It is noteworthy to state that the objective functions in (\ref{obj1}) and (\ref{obj2}) have the same generic form and so the solutions as well. The local optimal solutions of the objective functions (\ref{problem1}) and (\ref{problem2}) are given by Theorem \ref{theorem1} and they are sure to converge to the locally optimal solution as it is guaranteed by Algorithm \ref{hybrid-beamforming}. The effective SI power decreases in each iteration and it is lower bounded by zero. 

Fig.~\ref{plotconv} illustrates the progress of the effective SI power with respect to the number of iterations. We notice that the algorithm converges in just 10 iterations requiring 0.7 Mflops in total. In addition, we observe that the analog beamforming drops the SI power from 128 to 16 to prevent the ADC saturation (8x) while the digital beamforming drops the SI power from 16 to 10.06 (1.5x). Since the objective function (\ref{problem2}) has the same generic form as (\ref{problem1}), the results in Fig.~ \ref{plotconv} also hold for the problem (\ref{problem2}).
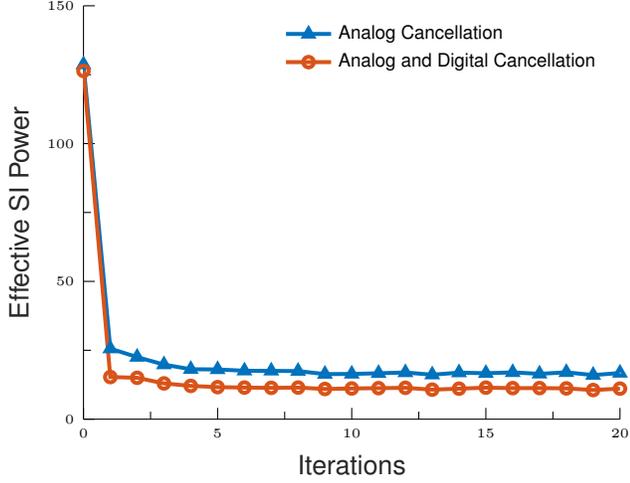
\begin{figure}[tb]
\vspace*{0.25in}
\centering
\setlength\fheight{5.5cm}
\setlength\fwidth{7.5cm}
%
%
\definecolor{mycolor1}{rgb}{0.00000,0.44700,0.74100}%
\definecolor{mycolor2}{rgb}{0.85000,0.32500,0.09800}%
\begin{tikzpicture}

\begin{axis}[%
width=0.951\fwidth,
height=\fheight,
at={(0\fwidth,0\fheight)},
scale only axis,
xmin=0,
xmax=20,
xlabel style={font=\color{white!15!black}},
xlabel={\textsf{Iterations}},
ymin=0,
ymax=150,
ylabel style={font=\color{white!15!black}},
ylabel={\textsf{Effective SI Power}},
axis background/.style={fill=none},
axis x line*=bottom,
axis y line*=left,
legend style={legend cell align=left, align=left, draw=none,fill=none}
]
\addplot [color=mycolor1, line width=1.5pt, mark=triangle, mark options={solid, mycolor1}]
  table[row sep=crcr]{%
0	128.675125398309\\
1	25.5978453001377\\
2	22.5332219764942\\
3	19.8092489910483\\
4	18.1219557295658\\
5	18.0439944478192\\
6	17.5700720322854\\
7	17.5404198654464\\
8	17.4442720470892\\
9	16.3629469751445\\
10	16.4119806769943\\
11	16.6634315375692\\
12	16.94008510502\\
13	16.1357927609917\\
14	16.8957045063643\\
15	16.6689649428052\\
16	16.9821834022385\\
17	16.4522536979715\\
18	17.0027093551544\\
19	15.999124315678\\
20	16.7600100802557\\
};
\addlegendentry{\textsf{Analog Cancellation} }

\addplot [color=mycolor2, line width=1.5pt, mark=o, mark options={solid, mycolor2}]
  table[row sep=crcr]{%
0	126.366886496391\\
1	15.2965150061167\\
2	15.0193699639311\\
3	12.9855063639\\
4	12.081918752509\\
5	11.6295354264083\\
6	11.4740888070268\\
7	11.3741674217781\\
8	11.4706258776203\\
9	10.9542522861326\\
10	11.142951513585\\
11	11.2621553725803\\
12	11.3884764883212\\
13	10.716446192405\\
14	11.0707605112445\\
15	11.4460562927954\\
16	11.2680039613763\\
17	11.2841191609632\\
18	11.1768313845447\\
19	10.5727338927349\\
20	11.0865054956294\\
};
\addlegendentry{\textsf{Analog and Digital Cancellation}}

\end{axis}
\end{tikzpicture}%
    \caption{Convergence of the effective SI power function implemented in analog ($\mathcal{J}$) defined in the objective function (\ref{obj1}) and hybrid analog/digital for the proposed hybrid beamforming algorithm \ref{hybrid-beamforming}. The plot is produced with $\mathsf{SNR} = 0~ \mathsf{dB}$ and SI power $\rho_{\mathsf{s}} = 15~\mathsf{dB}$.}
     \label{plotconv}
\end{figure}


\subsection{Complexity Analysis}
Table \ref{complexity} analyzes the computational complexity of the proposed hybrid beamforming algorithm.  Multiplying matrices $\mathbf{A} \in \mathbb{C}^{m\times n}$ and $\mathbf{B} \in \mathbb{C}^{n\times p}$ requires $nmp$ flops. An inverse of an $n \times n$ matrix using Cholesky decomposition requires $\frac{n^3}{3}$ flops whereas multiplication of a matrix $\mathbf{A} \in \mathbb{C}^{m\times n}$ and its Hermitian ($\mathbf{A}\mathbf{A}^*$) requires $\frac{nm^2}{2}$ flops.

\begin{table*}[!ht]
\renewcommand{\arraystretch}{1}
\setlength{\arrayrulewidth}{.7pt}
\vspace*{0.25in}
\caption{Computational complexity of the hybrid beamforming algorithm. Parameters values are selected from Table \ref{sysparam}.}
\label{complexity}
\centering
\begin{tabular}{|c|c|r|lr|}
\hline\hline
\bfseries Operation & \bfseries Complex Multiplications for Highest-Order Terms &\bfseries  Flops &\bfseries Dominant Term & \bfseries Contribution (Total) \\
\hline
$\mathbf{W}_{\mathsf{IAB}}^{\mathsf{RF}}$& $\frac{3}{2} N_{\mathsf{IAB}}^2N_{\mathsf{RF}}^{\mathsf{IAB}} + \frac{1}{3}N_{\mathsf{IAB}}^3 + N_{\mathsf{gNB}}N_{\mathsf{IAB}}N_{\mathsf{RF}}^{\mathsf{gNB}}+\frac{1}{3}\left(N_{\mathsf{RF}}^{\mathsf{gNB}}\right)^3 + N_{\mathsf{IAB}}^2N_\mathsf{RF}^{\mathsf{gNB}}$ & 21165 & $\frac{1}{3}N_{\mathsf{IAB}}^3$ & $51.61\%~(16.06\%)$ \\
\hline
$\mathbf{F}_{\mathsf{IAB}}^{\mathsf{RF}}$& $\frac{3}{2} N_{\mathsf{IAB}}^2N_{\mathsf{RF}}^{\mathsf{IAB}} + \frac{1}{3}N_{\mathsf{IAB}}^3 + N_{\mathsf{UE}}N_{\mathsf{IAB}}N_{\mathsf{RF}}^{\mathsf{UE}}+\frac{1}{3}\left(N_{\mathsf{RF}}^{\mathsf{UE}}\right)^3 + N_{\mathsf{IAB}}^2N_\mathsf{RF}^{\mathsf{UE}}$& 19373 & $\frac{1}{3}N_{\mathsf{IAB}}^3$ & $56.38\%~(16.06\%)$\\
\hline
$\mathbf{F}_{\mathsf{gNB}}^{\mathsf{RF}}$& $\frac{3}{2}N_{\mathsf{gNB}}^2N_{\mathsf{RF}}^{\mathsf{gNB}} + \frac{1}{3}N_{\mathsf{gNB}}^3$& 13995 & $\frac{1}{3}N_{\mathsf{gNB}}^3$ & $78.05\%~(16.06\%)$\\
\hline
$\mathbf{W}_{\mathsf{IAB}}^{\mathsf{BB}}$ & $9\left(N_{\mathsf{RF}}^{\mathsf{IAB}}\right)^2N_{\mathsf{IAB}} + 9N_{\mathsf{s}}^2N_{\mathsf{IAB}} + N_{\mathsf{IAB}}^2 N_{\mathsf{s}} + N_{\mathsf{s}}^3$  & 4360 & $N_{\mathsf{IAB}}^2 N_{\mathsf{s}}$ & $46.97\%~(3.03\%)$ \\
\hline
$\mathbf{F}_{\mathsf{IAB}}^{\mathsf{BB}}$ & $9\left(N_{\mathsf{RF}}^{\mathsf{IAB}}\right)^2N_{\mathsf{IAB}} + 9N_{\mathsf{s}}^2N_{\mathsf{IAB}} + N_{\mathsf{IAB}}^2 N_{\mathsf{s}} + N_{\mathsf{s}}^3$  & 4360 &$N_{\mathsf{IAB}}^2 N_{\mathsf{s}}$ & $46.97\%~(3.01\%)$ \\
\hline
$\mathbf{F}_{\mathsf{gNB}}^{\mathsf{BB}}$ & $9\left(N_{\mathsf{RF}}^{\mathsf{gNB}}\right)^2N_{\mathsf{gNB}} + 9N_{\mathsf{s}}^2N_{\mathsf{gNB}} + N_{\mathsf{gNB}}^2 N_{\mathsf{s}} + N_{\mathsf{s}}^3$  & 4360 & $N_{\mathsf{gNB}}^2 N_{\mathsf{s}}$ & $46.97\%~(3.01\%)$\\
\hline
$\mathbf{W}_{\mathsf{UE}}^{\mathsf{BB}}$ & $9\left(N_{\mathsf{RF}}^{\mathsf{UE}}\right)^2N_{\mathsf{UE}} + 9N_{\mathsf{s}}^2N_{\mathsf{UE}} + N_{\mathsf{UE}}^2 N_{\mathsf{s}} + N_{\mathsf{s}}^3$  & 328 & $9\left(N_{\mathsf{RF}}^{\mathsf{UE}}\right)^2N_{\mathsf{UE}}$ & $43.90\%~(0.21\%)$ \\
\hline
$\mathbf{W}_{\mathsf{UE}}^{\mathsf{RF}}$& $\frac{3}{2}N_{\mathsf{UE}}^2N_{\mathsf{RF}}^{\mathsf{UE}} + \frac{1}{3}N_{\mathsf{UE}}^3$& 70 & $\frac{3}{2}N_{\mathsf{UE}}^2N_{\mathsf{RF}}^{\mathsf{UE}}$ & $69.23\%~(0.07\%)$\\
\hline\hline
\end{tabular}
\end{table*}



\section{Numerical Analysis}
Table \ref{sysparam} gives the parameter values used in the simulations.  For each case, 1000 channels realizations were generated to perform the Monte Carlo simulation in MATLAB.
\begin{table}[t]
\renewcommand{\arraystretch}{1}
\setlength{\arrayrulewidth}{.7pt}
\caption{System parameters.}
\label{sysparam}
\centering
\begin{tabular}{|l|c|}
\hline\hline
\bfseries Parameter & \bfseries Value\\
\hline
Carrier frequency& 28 GHz \\
\hline
Bandwidth & 850 MHz\\
\hline
Number of gNB/IAB Antennas ($N_{\mathsf{gNB}}/N_{\mathsf{IAB}}$) & 32\\
\hline
Number of UE Antennas ($N_{\mathsf{UE}}$) & 4 \\
\hline
Number of Clusters ($C$) & 6\\
\hline
Number of Rays per Cluster ($R_c$) & 8\\
\hline
AoA/AoD Angular Spread& 20$^{\circ}$ \\
\hline
Transceivers Gap ($d$)& 2$\lambda$\\
\hline
Transceivers Incline ($\omega$) & $\frac{\pi}{6}$ \\
\hline
Rician Factor ($\kappa$) & 5 dB\\
\hline
SI Power ($\rho_{\mathsf{s}}$) & 15 dB \\ 
\hline
Number of Spatial Streams ($N_{\mathsf{s}}$)& 2\\ 
\hline
Number of RF Chains ($N_{\mathsf{RF}}$)  & 2\\
\hline\hline
\end{tabular}
\end{table}

\begin{figure}[tb]
\centering
\vspace*{0.25in}
\setlength\fheight{5.5cm}
\setlength\fwidth{7.5cm}
%
%
\definecolor{mycolor1}{rgb}{0.00000,0.44700,0.74100}%
\definecolor{mycolor2}{rgb}{0.85000,0.32500,0.09800}%
\definecolor{mycolor3}{rgb}{0.92900,0.69400,0.12500}%
\definecolor{mycolor4}{rgb}{0.49400,0.18400,0.55600}%
\definecolor{mycolor5}{rgb}{0.46600,0.67400,0.18800}%
\definecolor{mycolor6}{rgb}{0.30100,0.74500,0.93300}%

\definecolor{mycolor7}{rgb}{1.00000,0.00000,1.00000}%
\definecolor{mycolor8}{rgb}{0.00000,0.49804,0.00000}%
\definecolor{mycolor9}{rgb}{0.00000,0.44706,0.74118}%
\definecolor{mycolor10}{rgb}{0.87059,0.49020,0.00000}%
\begin{tikzpicture}

\begin{axis}[%
width=0.951\fwidth,
height=\fheight,
at={(0\fwidth,0\fheight)},
scale only axis,
xmin=-25,
xmax=5,
xlabel style={font=\color{white!15!black}},
xlabel={\textsf{SNR (dB)}},
ymin=0,
ymax=35,
ylabel style={font=\color{white!15!black}},
ylabel={\textsf{Sum Spectral Efficiency (bits/s/Hz)}},
axis background/.style={fill=white},
axis x line*=bottom,
axis y line*=left,
legend style={at={(0.03,0.97)},anchor=north west,legend cell align=left, align=left, draw=none}
]

\node[right, align=left, rotate=0]
at (axis cs:-25,18) {\scriptsize \textsf{Dashed Line: Hybrid Beamforming} };

\addplot [color=mycolor1,mark=triangle, line width=1.5pt]
  table[row sep=crcr]{%
-25	1.80839514961157\\
-22.5	2.98148134909461\\
-20	4.67025099614668\\
-17.5	6.48896559219936\\
-15	8.66148342617481\\
-12.5	11.1521687136973\\
-10	13.9048716017374\\
-7.5	16.8582377153452\\
-5	19.9447443988196\\
-2.5	23.1294496396274\\
0	26.3739419546404\\
2.5	29.6441846105043\\
5	32.9496839071419\\
};
\addlegendentry{\textsf{Upper Bound}}

\addplot [color=mycolor2,mark = o, line width=1.5pt]
  table[row sep=crcr]{%
-25	1.73998618578106\\
-22.5	2.88862011153832\\
-20	4.54641771446955\\
-17.5	6.33079196331291\\
-15	8.46669524994564\\
-12.5	10.9192215419585\\
-10	13.636356140291\\
-7.5	16.5612333962715\\
-5	19.6267187769265\\
-2.5	22.7881938508954\\
0	25.9981341612962\\
2.5	29.2381277657277\\
5	32.4678434722871\\
};
\addlegendentry{\textsf{All-Digital}}

\addplot [color=mycolor10,mark=diamond,mark size = 2.5,dash pattern={on 10pt off 3pt on 0pt off 0pt}, line width=1.5pt]
  table[row sep=crcr]{%
-25	1.38676086201304\\
-22.5	2.31585385244829\\
-20	3.69153405389599\\
-17.5	5.17242248825439\\
-15	6.94665357447858\\
-12.5	8.95968170082838\\
-10	11.1464621124065\\
-7.5	13.4328180543866\\
-5	15.7252870890375\\
-2.5	17.9617152318461\\
0	20.0795312224624\\
2.5	22.1739552107952\\
5	24.0031199674035\\
};
\addlegendentry{\textsf{Proposed}}

\addplot [color=mycolor8,mark =*, line width=1.5pt]
  table[row sep=crcr]{%
-25	1.27196370446014\\
-22.5	2.05066763170641\\
-20	3.11202639068595\\
-17.5	4.25088841129759\\
-15	5.58316406447165\\
-12.5	7.08984599820367\\
-10	8.74255295500149\\
-7.5	10.4608219978856\\
-5	12.2353892050461\\
-2.5	14.0190743386121\\
0	15.7912001102845\\
2.5	17.5643304982342\\
5	19.2911859031953\\
};
\addlegendentry{\textsf{SVD}}

\addplot [color=mycolor7,mark =square, mark size= 1.5,dash pattern={on 10pt off 3pt on 0pt off 0pt}, line width=1.5pt]
  table[row sep=crcr]{%
-25	0.935370276796865\\
-22.5	1.52913026606292\\
-20	2.40438892884529\\
-17.5	3.35976661514808\\
-15	4.50744038560152\\
-12.5	5.82966944406655\\
-10	7.32581053263052\\
-7.5	8.92130740165908\\
-5	10.6197068837029\\
-2.5	12.4086215026508\\
0	14.2100258839061\\
2.5	16.0524619285101\\
5	17.7998434118524\\
};
\addlegendentry{\textsf{Work} \cite{ian}}

\addplot [color=mycolor6,mark =x , mark size = 2.5,dash pattern={on 10pt off 3pt on 0pt off 0pt}, line width=1.5pt]
  table[row sep=crcr]{%
-25	0.717214697534494\\
-22.5	1.2109212862488\\
-20	1.96408873601977\\
-17.5	2.78711561763153\\
-15	3.79311680119515\\
-12.5	4.96689802088497\\
-10	6.2812794549039\\
-7.5	7.70625910353802\\
-5	9.19959397968485\\
-2.5	10.733233610917\\
0	12.2836794061653\\
2.5	13.844637265403\\
5	15.3814916496101\\
};
\addlegendentry{\textsf{Half-Duplex}}

\end{axis}
\end{tikzpicture}%
    \caption{Sum spectral efficiency results: Performance comparison between the proposed algorithm with the related works as well as the benchmarking tools.}
     \label{plot1}
\end{figure}
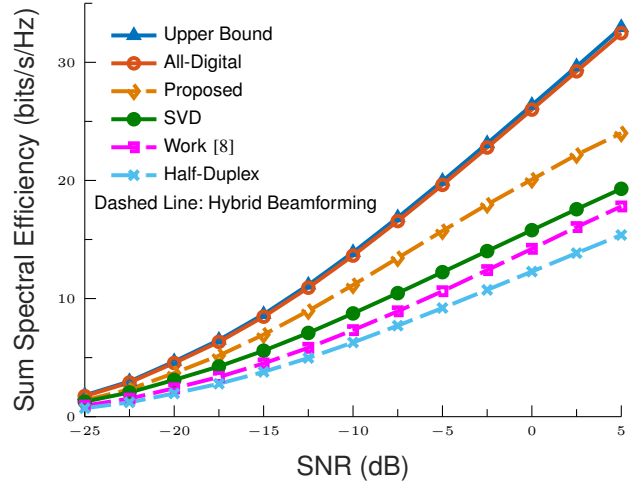

Among the three FD hybrid beamforming solutions in Fig.~\ref{plot1}, SVD and \cite{ian} are very sensitive to SI because the relative analog beamformers ignore SI cancellation, which leads to ADC saturation and hence more performance loss. Our approach, however, introduces analog beamformer design to reduce a large amount of SI power as shown in Fig.~\ref{plotconv}. The performance of the proposed system is improved by the optimal digital beamforming solution which further suppresses residual SI. At $\mathsf{SNR} = 5~\mathsf{dB}$, we notice the proposed FD system achieves a gain of around 4.71, and 6.2 bits/s/Hz with respect to SVD, work \cite{ian}, respectively. In addition, our proposed FD beamforming algorithm outperforms the HD mode, which is a goal of this work, and achieves a gain of 8.62 bits/s/Hz at $\mathsf{SNR} = 5~\mathsf{dB}$.

\section{Conclusion}
In this paper, we proposed a low complexity hybrid analog/digital beamforming design for a full duplex integrated access and backhaul system. The proposed algorithm designs the hybrid precoders for the gNB Donor and IAB node, and the hybrid combiners for the IAB node and user equipment.  In simulation, the algorithm converges in five iterations while reducing a large amount of SI in the analog domain to avoid ADC saturation. In addition, the hybrid beamforming results are further improved by the implementation of the optimal digital beamformers which wipe out the residual SI. Simulations show that the proposed FD beamforming design outperforms the related works in terms of spectral efficiency as well as it beats the half duplex mode which demonstrates the feasibility of the proposed design for practical consideration.

\appendices
\section{Proof of Theorem \ref{theorem1}}\label{appendixtheorem1}
The objective is to minimize the equality in (\ref{problem1}), while preserving the signal dimensions. We begin by expressing the Lagrangian function given by
\begin{equation}
\mathcal{L}\left(\mathbf{W}_{\mathsf{IAB}}^{\mathsf{RF}},x\right) = \mathbf{W}_{\mathsf{IAB}}^{\mathsf{RF}*}\mathbf{R}_{\mathsf{IAB}}\mathbf{W}_{\mathsf{IAB}}^{\mathsf{RF}}  + x\left(\mathbf{W}_{\mathsf{IAB}}^{\mathsf{RF}*}\mathbf{H}_{\mathsf{b}}\mathbf{F}_{\mathsf{gNB}}^{\mathsf{RF}} - \mathbf{I}_{N_\mathsf{RF}}\right)    
\end{equation}
The Lagrangian conditions for this problem are
\begin{equation}\label{lag}
 \nabla_{\mathbf{W}_{\mathsf{IAB}}^{\mathsf{RF}*}}\mathcal{L}=0     
\end{equation}
\begin{equation}\label{lag1}
x^\star \left(\mathbf{W}_{\mathsf{IAB}}^{\mathsf{RF}*}\mathbf{H}_{\mathsf{b}}\mathbf{F}_{\mathsf{gNB}}^{\mathsf{RF}} - \mathbf{I}_{N_\mathsf{RF}}\right) = 0   
\end{equation}
Eq. (\ref{lag}) can be reformulated as 
\begin{equation}\label{eqlag}
\begin{split}
 &\nabla_{\mathbf{W}_{\mathsf{IAB}}^{\mathsf{RF}*}}\mathrm{Tr}\left( \mathbf{W}_{\mathsf{IAB}}^{\mathsf{RF}*}\mathbf{R}_{\mathsf{IAB}}\mathbf{W}_{\mathsf{IAB}}^{\mathsf{RF}}\right) \\&+ x^\star \nabla_{\mathbf{W}_{\mathsf{IAB}}^{\mathsf{RF}*}}\left(\mathbf{W}_{\mathsf{IAB}}^{\mathsf{RF}*}\mathbf{H}_{\mathsf{b}}\mathbf{F}_{\mathsf{gNB}}^{\mathsf{RF}} - \mathbf{I}_{N_\mathsf{RF}}\right) = 0
 \end{split}
\end{equation}
Where $\nabla$ is the gradient operator and $x^\star$ is the Lagrangian multiplier. Differentiating (\ref{eqlag}) with respect to $\mathbf{W}_{\mathsf{IAB}}^{\mathsf{RF}*}$, we get
\begin{equation}
 \mathbf{R}_{\mathsf{IAB}}\mathbf{W}_{\mathsf{IAB}}^{\mathsf{RF}} + x \mathbf{H}_{\mathsf{b}}\mathbf{F}_{\mathsf{gNB}}^{\mathsf{RF}} = 0  
\end{equation}
\begin{equation}
\mathbf{W}_{\mathsf{IAB}}^{\mathsf{RF}} = -  \mathbf{R}_{\mathsf{IAB}}^{-1}\mathbf{H}_{\mathsf{b}}\mathbf{F}_{\mathsf{gNB}}^{\mathsf{RF}}x   
\end{equation}
Then substituting the expression of ($\mathbf{W}_{\mathsf{IAB}}^{\mathsf{RF}}$) in (\ref{lag1}), we obtain
\begin{equation}
\left(- \mathbf{R}_{\mathsf{IAB}}^{-1}\mathbf{H}_{\mathsf{b}}\mathbf{F}_{\mathsf{gNB}}^{\mathsf{RF}}x  \right)^*    \mathbf{H}_{\mathsf{b}}\mathbf{F}_{\mathsf{gNB}}^{\mathsf{RF}} = \alpha \mathbf{I}_{N_{\mathsf{RF}}}
\end{equation}
\begin{equation}
x = -\alpha \left( \mathbf{F}_{\mathsf{gNB}}^{\mathsf{RF}*}\mathbf{H}_{\mathsf{b}}^*\mathbf{R}_{\mathsf{IAB}}^{-1} \mathbf{H}_{\mathsf{b}}\mathbf{F}_{\mathsf{gNB}}^{\mathsf{RF}}  \right)^{-1}    
\end{equation}
Thereby
\begin{equation}
\mathbf{W}_{\mathsf{IAB}}^{\mathsf{RF}} = \alpha \mathbf{R}_{\mathsf{IAB}}^{-1}\mathbf{H}_{\mathsf{b}}\mathbf{F}_{\mathsf{gNB}}^{\mathsf{RF}}\left( \mathbf{F}_{\mathsf{gNB}}^{\mathsf{RF*}}\mathbf{H}_{\mathsf{b}}^* \mathbf{R}_{\mathsf{IAB}}^{-1} \mathbf{H}_{\mathsf{b}} \mathbf{F}_{\mathsf{gNB}}^{\mathsf{RF}}   \right)^{-1}        
\end{equation}
The proof of the analog precoder at the IAB node ($\mathbf{F}_{\mathsf{IAB}}^{\mathsf{RF}}$) follows the same derivation steps as ($\mathbf{W}_{\mathsf{IAB}}^{\mathsf{RF}}$).
\section{Proof of Theorem \ref{theorem2}}\label{appendixtheorem2}
For $\mathbf{X}_{\mathsf{RF}}$ given
Consider the SVD $\mathbf{X}_{\mathsf{RF}} = \mathbf{U}_{\mathsf{RF}}\mathbf{S}_{\mathsf{RF}}\mathbf{V}_{\mathsf{RF}}^*$, and let $\mathbf{Q} = \mathbf{S}_{\mathsf{RF}}\mathbf{V}_{\mathsf{RF}}^*\mathbf{X}_{\mathsf{BB}} \in \mathbb{C}^{L\times N}$. Then $\mathbf{X}_{\mathsf{RF}}\mathbf{X}_{\mathsf{BB}}=\mathbf{U}_{\mathsf{RF}}\mathbf{Q}$ so that $\mathbf{X}_{\mathsf{BB}}^*\mathbf{X}_{\mathsf{RF}}^*\mathbf{X}_{\mathsf{RF}}\mathbf{X}_{\mathsf{BB}}=\mathbf{Q}^*\mathbf{U}_{\mathsf{RF}}^*\mathbf{U}_{\mathsf{RF}}\mathbf{Q}=\mathbf{Q}^*\mathbf{Q}$. The generic form of the spectral efficiency in (\ref{ratebackhaul}) and (\ref{rateaccess}) is expressed in terms of $\mathbf{Q}$ as 
\begin{equation}\label{opt2}
\begin{split}
 \max\limits_{\mathbf{Q}^*\mathbf{Q}=\mathbf{I}_N} \log\det\left( \mathbf{I}_N + \rho \mathbf{Q}^*\mathbf{U}_{\mathsf{RF}}^*\mathbf{A}\mathbf{A}^*\mathbf{U}_{\mathsf{RF}}\mathbf{Q} \right)
\end{split}
\end{equation}
Solution $\mathbf{Q}_{\star}$ is given by the $N$ dominant left singular vectors of $\mathbf{U}_{\mathsf{RF}}\mathbf{A}$. By changing variables, we solve $\mathbf{X}_{\mathsf{BB}}=\mathbf{V}_{\mathsf{RF}}\mathbf{S}_{\mathsf{RF}}^{-1}\mathbf{Q}_{\star}$. For $\mathbf{Q}=\mathbf{Q}_{\star}$, the objective function becomes
\begin{equation}\label{opt3}
\begin{split}
\log\det\left( \mathbf{I}_N + \rho \mathbf{Q}^*\mathbf{U}_{\mathsf{RF}}^*\mathbf{A}\mathbf{A}^*\mathbf{U}_{\mathsf{RF}}\mathbf{Q} \right) \leq \log\det\left( \mathbf{I}_N + \rho \mathbf{A}^*\mathbf{A}\right)
\end{split}
\end{equation}
with the bound in (\ref{opt3}) applying to any semi-unitary $\mathbf{U}_{\mathsf{RF}}$. This bound holds with equality if the columns of $\mathbf{U}_{\mathsf{RF}}$ are taken as the $L$ dominant left singular vectors of $\mathbf{A}$.

\bibliographystyle{IEEEtran}
\bibliography{main}

\end{document}